\DeclarePairedDelimiter\ceil{\lceil}{\rceil}
\DeclarePairedDelimiter\abs{\lvert}{\rvert}
\DeclarePairedDelimiter\floor{\lfloor}{\rfloor}
\begin{document}

\title{Space-efficient estimation of empirical tail dependence coefficients for bivariate data streams}

\author[1,2]{Alastair Gregory*}

\author[1,2]{Kaushik Jana}

\authormark{Alastair Gregory \textsc{et al}}

\address[1]{\orgdiv{Programme for Data-Centric Engineering}, \orgname{Alan Turing Institute}, \orgaddress{\state{London}, \country{U.K.}}}

\address[2]{\orgdiv{Department of Mathematics}, \orgname{Imperial College London}, \orgaddress{\state{London}, \country{U.K.}}}

\corres{*Dr Alastair Gregory, Department of Mathematics, Imperial College London, SW7 2AZ. \email{a.gregory14@imperial.ac.uk}}

\presentaddress{This is sample for present address text this is sample for present address text}

\abstract[Summary]{
This article proposes a space-efficient approximation to empirical tail dependence coefficients of an indefinite bivariate stream of data. The approximation, which has stream-length invariant error bounds, utilises recent work on the development of a summary for bivariate empirical copula functions. The work in this paper accurately approximates a bivariate empirical copula in the tails of each marginal distribution, therefore modelling the tail dependence between the two variables observed in the data stream. Copulas evaluated at these marginal tails can be used to estimate the tail dependence coefficients. Modifications to the space-efficient bivariate copula approximation, presented in this paper, allow the error of approximations to the tail dependence coefficients to remain stream-length invariant. Theoretical and numerical evidence of this, including a case-study using the Los Alamos National Laboratory netflow data-set, is provided within this article.}

\keywords{tail dependence, streaming data, copulas}

\jnlcitation{\cname{%
\author{A. Gregory}, and \author{K. Jana}} (\cyear{2019}), 
\ctitle{Space-efficient estimation of empirical tail dependence coefficients for bivariate data streams}}

\maketitle

%\footnotetext{\textbf{Abbreviations:} ANA, anti-nuclear antibodies; APC, antigen-presenting cells; IRF, interferon regulatory factor}

\section{Introduction}

Streaming data arises in a wide array of contemporary research fields, including but not limited to the Internet of Things (IoT), cyber-security \citep{Adams2016} and continuous sensor observation \citep{Golab}. The data is acquired continuously and usually at a fast pace. Typically one can't store all of the data ever streamed due to memory constraints and it is infeasible to repeat statistical analyzes on the entire stream as it grows indefinitely over time. To deal with this problematic situation, estimation methods for many different statistical analyzes on such data streams have been proposed \citep{Gama, Aggarwal}. Some of these techniques use statistical summaries, where only a succinct number of carefully selected elements that have entered into the data stream are stored. One of the most common analyzes that the literature has focused on is quantile estimation for a univariate variable observed in a data stream (including median estimation) \citep{Buragohain}. Recently, \cite{Gregory} proposed an algorithm to generate an approximation to the bivariate empirical copula function (a common method of nonparametric dependence modelling) using a succinct statistical summary; this approximation has a guaranteed error bound.

This paper builds on the work in \cite{Gregory} and considers an important use of copulas: computing tail dependence coefficients between random variables \citep{Sibuya}. Tail dependence coefficients between random variables quantify their co-movement in the tails of the marginals. For example, two random variables may be weakly dependent in the vast majority of their probability space, however in the tails of this space they may be highly dependent. This behaviour is often seen in financial analysis \citep{Rodriguez}, where sometimes two assets exhibit sharp price increases and decreases at similar times, but tend to have relatively uncorrelated typical daily price movements. The importance of tail dependence in fields such as hydrology \citep{Poulin} and energy \citep{Reboredo} has also been studied. For the purpose of approximating empirical tail dependence coefficients between streams of data, with stream-length invariant error, the type of summary that was proposed in \cite{Gregory} is not sufficient since it returned uniform error over the marginal distributions.
%It is important for any approximation to the empirical tail dependence coefficient to have error that can be reduced relatively at the tails.
Therefore the copula summary in \cite{Gregory} results in an approximation to the tail dependence coefficient with linearly growing error w.r.t. the number of elements in the data stream. To remedy this, we propose to use relative accuracy quantile summaries from the univariate literature \citep{Cormode} within the copula summary, which allows suitable properties of the modified summaries' approximation error to be proved. These properties lead to approximations of the tail dependence coefficients that have constant error w.r.t. the number of elements in the data stream.

This article is structured as follows. The next section introduces the empirical copula and how it can be used to construct empirical tail dependence coefficients between random variables observed through data. This section also describes the challenges associated with computing empirical copula approximations when the data is streamed sequentially, and how the copula summary proposed in \cite{Gregory} can be used to provide an approximation to copulas in this streaming regime. Then Sec. \ref{sec:estimationtaildependence} introduces how one can adapt the summary to compute accurate approximations to empirical tail dependence coefficients for streaming data. Sections \ref{sec:theoreticalanalysis} and \ref{sec:numericalanalysis} provide a theoretical and numerical analysis of the approximation respectively. In particular, Sec. \ref{data_analysis} provides a case-study of applying the proposed methodology to the Los Alamos National Laboratory (LANL) netflow data-set.

\section{Bivariate empirical copulas and tail dependence}

\label{sec:empiricalcopulas}

A copula is a dependence model between two or more different random variables \citep{Sklar}. The work in \cite{Gregory} explains how one can consider higher dimensional (greater than two) copulas in the context of streaming data (the motivation of this paper), using decompositions involving pair-wise copula models \citep{Aas}. This is described further in Sec. \ref{sec:spacememory}. However, this aspect is outside the scope of the current study on tail dependence, and therefore for the remainder of this paper we will focus on the case where there is only two random variables. 
%due to the bivariate form of the work presented in \cite{Gregory}.
%The work in \cite{Gregory} does however present an example of how the proposed work could be extended to higher dimensions.
%, although the associated evidence was empirical and therefore falls outside the theoretical scope of the current study.
More specifically, a bivariate copula function $C(u_1,u_2)$, for $u_1,u_2 \in [0,1]$, is the joint distribution function between the random variables $X_{(1)}$ and $X_{(2)}$ where both marginals are uniformly distributed. The bivariate copula function is given by,
\begin{equation}
C(u_1,u_2)=F_{X_{(1)},X_{(2)}}\left(F^{-1}_{X_{(1)}}(u_1),F^{-1}_{X_{(2)}}(u_2)\right),
\label{equation:copula}
\end{equation}
where $F_{X_{(1)},X_{(2)}}(x_{(1)},x_{(2)})=P(X_{(1)} \leq x_{(1)}, X_{(2)} \leq x_{(2)})$ is the joint cumulative distribution function (CDF) of $X_{(1)}$ and $X_{(2)}$, and $F^{-1}_{X_{(1)}}(u_{1})$ and $F^{-1}_{X_{(2)}}(u_2)$ are the marginal generalized inverse CDFs (quantile functions). They are defined by, $$
\inf_{x_{(1)}  \in \mathbb{R}}F_{X_{(1)}}(x_{(1)}) \geq u_1 \quad \text{and} \quad \inf_{x_{(2)}\in\mathbb{R}}F_{X_{(2)}}(x_{(2)}) \geq u_2,
$$
respectively \citep{Charpentier}. Let $\big\{x_{(1)}^i\big\}_{i=1}^{n}$ and $\big\{x_{(2)}^i\big\}_{i=1}^{n}$ be realisations (the data) of the random variables $X_{(1)}$ and $X_{(2)}$ respectively. In this case, a nonparametric empirical copula function is typically found to represent the dependence between the two data-sets in $\big\{x_{(1)}^i,x_{(2)}^i\big\}_{i=1}^{n}$. The empirical copula \citep{Deheuvels} converges to the true copula function in (\ref{equation:copula}), within the limit of $n \to \infty$. It is defined by,
\begin{equation}
\hat{C}(u_1,u_2)=\frac{1}{n}\sum^{n}_{i=1}\left(1_{x_{(1)}^i \leq \tilde{x}^{\ceil*{u_1 n}}_{(1)}}\right) \left( 1_{x_{(2)}^i \leq \tilde{x}_{(2)}^{\ceil*{u_2 n}}} \right).
\label{equation:indicatorcopula}
\end{equation}
where $\ceil*{z} = \text{arg}\min_{m \in \mathbb{Z}}( m \geq z)$ , $1$ is the indicator function and $\tilde{x}_{(k)}^i$ is the $i$'th order statistic of $\big\{x_{(k)}^i\big\}_{i=1}^{n}$, for $k=1,2$. Here, the mean over the products of indicator functions represents the approximation to the joint CDF of $X_{(1)}$ and $X_{(2)}$ in (\ref{equation:copula}), whilst the ranked order statistics $\tilde{x}_{(k)}^{\ceil*{u_k n}}$ within each indicator function represents the approximation to the marginal generalized inverse CDFs in (\ref{equation:copula}). Since each term in the summation in (\ref{equation:indicatorcopula}) will only be non-zero if $x_{(k)}^i \leq \tilde{x}_{(k)}^{\ceil*{u_k n}}$, for both $k=1,2$, then it suffices to sum the indicator functions corresponding to only one value of $k$ over all indices that satisfy $x_{(l)}^i \leq \tilde{x}_{(l)}^{\ceil*{u_l n}}$ for $l=\{1, 2\}\setminus k$. By doing this, the empirical copula can alternatively be written as \citep{Gregory},
\begin{equation}
\hat{C}(u_1,u_2)=\frac{n_1}{n}\sum_{i \in I}1_{x_{(2)}^i \leq \tilde{x}_{(2)}^{\ceil*{u_2 n}}},
\label{equation:newformcopulaind}
\end{equation}
where $n_1$ is the cardinality of the set $I \subset [1,n]$, such that $x_{(1)}^{I}$ correspond to elements in $\big\{x_{(1)}^{i}\big\}_{i=1}^{n}$ that satisfy $x_{(1)}^{I} \leq \tilde{x}_{(1)}^{\ceil*{u_1 n}}$. Note that $u_1$ is now accounted for via the set $I$. Now let $\hat{F}_{n,(2)}^{-1}(u_2)$ be the empirical quantile function \citep{Ma} of $\big\{x_{(2)}^{i}\big\}_{i=1}^{n}$, for which we will use the approximation,
\begin{equation}
\hat{F}_{n,(2)}^{-1}(u_2)=\tilde{x}_{(2)}^{\ceil*{u_2 n}}
\label{equation:quantilefunctionapprox},
\end{equation}
to the quantile function $F^{-1}_{X_{(2)}}(u_2)$. Also let $\hat{F}_{n_1,(2)}(x)$ be the empirical CDF given by, $$\hat{F}_{n_1,(2)}(x):=\frac{1}{n_1}\sum^{n_1}_{i=1}1_{x_{(1)}^{I(i)} \leq x},$$ where $I(i)$ is the $i$'th element of $I$\footnote{Note that this expression is only taken over elements in the set $I$}. Using this notation, another way of writing (\ref{equation:newformcopulaind}) is,
\begin{equation}
\hat{C}(u_1,u_2)=\frac{n_1}{n}\hat{F}_{n_1,(2)}\left(\hat{F}^{-1}_{n,(2)}(u_2)\right).
\label{equation:newformcopula}
\end{equation}
For more information on the statistical explanation behind the approximation in (\ref{equation:newformcopula}), turn to \cite{Gregory}.

One of the by-products of a copula is the computation of the tail dependence coefficients (there is an upper and a lower one) \citep{Sibuya}. These coefficients allow one to study the dependence in the tails of each of the marginals $X_{(1)}$ and $X_{(2)}$. For example, $X_{(1)}$ and $X_{(2)}$ may have low dependence over their entire probability space, however they could have very high dependence when both $X_{(1)}$ and $X_{(2)}$ take extreme values. This aspect of dependence is important in many applications, for example in financial analysis where it is crucial to realise if two assets have a high relative probability of both crashing at similar times.
%This paper will only concentrate on the lower tail dependence coefficient from henceforth for simplicity, but note that the methodology proposed can be generalized also to the upper tail dependence coefficient. 
The lower tail dependence coefficient between $X_{(1)}$ and $X_{(2)}$ can be computed directly via the copula function,
\begin{equation}
\lambda_L = \lim_{u \to 0}\frac{C(u,u)}{u},
\label{equation:ltdc}
\end{equation}
and so too the upper tail dependence coefficient,
\begin{equation}
\lambda_U = \lim_{u \to 1}\frac{1 -2u + C(u,u)}{(1-u)}.
\label{equation:utdc}
\end{equation}
The focus of this paper is on the estimation of empirical tail dependence coefficients for streaming data; this is done by utilising nonparametric empirical copulas. Under true model assumptions, one can estimate tail dependence by fitting a suitable parametric copula $C(u_1, u_2)$ to the data and this could offer a better estimate of the tail dependence coefficients than using any nonparametric copula function. An example of such a parametric copula is the Gaussian copula, where Gaussianity of the data is assumed. However in practice, if the assumptions on the data stream are too restrictive, then it is suitable to consider a nonparametric estimate of the underlying copula function. Estimators of the tail dependence coefficients that utilise these nonparametric copula estimates $\hat{C}(u_1,u_2)$, that fall into the scope of this paper, can be used in this case.  For a detailed account of these estimates, see \cite{Frahm} and \cite{Schmidt}. One could employ a model goodness-of-fit technique (e.g. Chi-Square), such as a test for Gaussianity in the case of considering a Gaussian copula, to confirm if this is the case. If it is more suitable to employ a parametric copula, the computational challenges associated with updating the tunable parameter estimates with indefinite data streams would have to be considered by the user, e.g. see \cite{Gregory} for more detail on this. On the other hand if a nonparametric copula estimate is a reasonable choice in a particular context, then the proposed methodology in this paper can be used to provide approximations of tail dependence coefficients over a data stream. This is the assumption that we make for the remainder of the paper.

Using the empirical copula $\hat{C}(u_1,u_2)$, one estimate of the empirical lower tail dependence coefficient is given by \citep{Caillault},
\begin{equation}
\hat{\lambda}_{L} = \lim_{i \to 0}\frac{\hat{C}(i/n,i/n)}{i/n},
\label{equation:lowerlimit}
\end{equation}
and one estimate of the empirical upper tail dependence coefficient is given by,
\begin{equation}
\hat{\lambda}_{U} = \lim_{i \to n}\frac{1-\frac{2i}{n}+\hat{C}(i/n,i/n)}{1-\frac{i}{n}} .
\label{equation:upperlimit}
\end{equation}
These are consistent with the tail dependence coefficients in (\ref{equation:ltdc}) and (\ref{equation:utdc}) respectively as $n \to \infty$, since the empirical copula is also consistent \citep{Deheuvels}. Empirically one cannot take this limit and therefore it suffices to study the following functions,
\begin{equation}
\hat{\lambda}_{L}(i/n)=\frac{\hat{C}(i/n,i/n)}{i/n},
\label{equation:dependencefunction}
\end{equation}
and
\begin{equation}
\hat{\lambda}_{U}(i/n)=\frac{1-\frac{2i}{n}+\hat{C}(i/n,i/n)}{1-\frac{i}{n}}.
\label{equation:upperdependencefunction}
\end{equation}
%For the remainder of this study, we will just concentrate on the lower tail dependence coefficient for brevity, however it should be noted that the following framework and theoretical analysis (see Sec. \ref{sec:theoreticalanalysis}) can be readily extended to the case of the upper tail dependence coefficient.
For $i=n-1,$ $n-2,\ldots,$ $2,$ $1$, the functions in (\ref{equation:dependencefunction}) and (\ref{equation:upperdependencefunction}) describe the path of $\hat{\lambda}_L$ and $\hat{\lambda}_U$ as $i$ tend to 1 and $n$ respectively \citep{Caillault}. Note that when evaluating these functions with a fixed value of $i$, they are consistent with the limits in (\ref{equation:lowerlimit}) and (\ref{equation:upperlimit}) as $n \to \infty$, since $\lim_{n \to \infty}(i/n)=0$. It has been proposed to evaluate the functions in (\ref{equation:dependencefunction}) and (\ref{equation:upperdependencefunction}) with the minimum and maximum values of $i$ that the functions are decreasing and increasing for respectively \citep{Caillault}. However for the scope of this paper, which will estimate these functions for an arbitrary fixed value of $i$, this particular selection is not justified further. This paper will instead concentrate on the approximation of the functions in (\ref{equation:dependencefunction}) and (\ref{equation:upperdependencefunction}) when the empirical copula function $\hat{C}(u_1,u_2)$ must be constructed over a data stream. The next section will propose an approximation to the tail dependence functions through approximation to the empirical copula, formed via a succinct summary of the data stream.

\subsection{Bivariate copula summaries for streaming data}

\label{sec:streamingdata}

Streaming data is the scenario in which say, the bivariate data stream $\big\{x_{(1)}^i,x_{(2)}^i\big\}_{i=1}^{n}$, is added to sequentially over (possibly indefinite) time. In the context of streaming data, it is not possible to store all of the data points in the stream or be able to consistently re-compute the order statistics for the empirical copula function (as mentioned above). This is typically due to restrictions on runtime and/or memory/storage. Quantile summaries are a common way of maintaining an approximation to the empirical quantile function in (\ref{equation:quantilefunctionapprox}) as an univariate data stream is added to, whilst only storing a succinct number of elements from the stream in space-memory \citep{Greenwald}. On this note, define an $\epsilon$-approximate quantile summary $Q$, to be an approximation to the quantile function $\hat{F}_{n,(k)}^{-1}(u)=\tilde{x}_{(k)}^{\ceil*{un}}$, that returns a value $\tilde{x}_{(k)}^j$, for $k=1,2$, where $j \in [\ceil*{un}-\epsilon n, \ceil*{un}+\epsilon n]$. An algorithm to construct such an summary was given in \cite{Greenwald}.

The work in \cite{Gregory} proposed another summary made up of a series of different $\epsilon$-approximate quantile summaries. This \textit{copula summary} maintained an approximation $\tilde{C}(u_1,u_2)$ to the bivariate empirical copula function in (\ref{equation:newformcopula}) over the data stream $\big\{x_{(1)}^i,x_{(2)}^i\big\}_{i=1}^{n}$. It was shown that an approximation within $[\hat{C}(u_1,u_2)-5 \epsilon,\hat{C}(u_1,u_2)+5\epsilon]$ can be achieved. Just like the univariate quantile summaries that it is composed from, the copula summary was space-efficient and stored only a succinct number of elements from the data stream. The extension to this summary, proposed in Sec. \ref{sec:modificationsummary}, will allow an approximation to empirical tail dependence coefficients of data streams to be computed.

\section{Estimation of tail dependence for data streams}

\label{sec:estimationtaildependence}

%The copula summary presented in that work was not suitable to find such coefficients of tail dependence for one main reason: \textit{The subsummary associated with the largest or smallest first component in the stream would not neccessarily contain the largest or smallest second component in the stream. That value would have been contained in some other subsummary, and likely removed from the summary at an earlier time than required.}

The copula summary presented in \cite{Gregory} was not suitable to find coefficients of tail dependence for one main reason: the error of the approximation to (\ref{equation:newformcopula}) was uniform over a grid of evaluation points $u_1$ and $u_2$. Therefore the resolution of the approximation would be as refined in the tails of the two marginals as it would be for the medians of both marginals. This results in an approximation to the tail dependence coefficients (replacing $\hat{C}(i/n,i/n)$ with $\tilde{C}(i/n,i/n)$ respectively in (\ref{equation:dependencefunction}) and (\ref{equation:upperdependencefunction})) that has error growing linearly with $n$. One can see this from the following error bound of the lower tail dependence coefficient function for fixed $i$,
\begin{equation}
\abs*{\hat{\lambda}_L(i/n)-\frac{\tilde{C}(i/n,i/n)}{i/n}} = \frac{n}{i}\abs*{\hat{C}(i/n,i/n)-\tilde{C}(i/n,i/n)}  \leq \frac{5\epsilon n}{i},
\label{equation:growthoferror}
\end{equation}
and the error bound of the upper tail dependence function for fixed $j=n-i$,
\begin{equation}
\abs*{\hat{\lambda}_U(i/n)-\frac{1-\frac{2i}{n}+\tilde{C}(i/n,i/n)}{1-(i/n)}} = \frac{n}{j}\abs*{\hat{C}(i/n,i/n)-\tilde{C}(i/n,i/n)}  \leq \frac{5\epsilon n}{j}.
\label{equation:uppergrowthoferror}
\end{equation}
Simply refining the prescribed error $\epsilon$ would be insufficient; one would need to sequentially refine $\epsilon$ as the stream gets longer, tending towards 0. The work presented in this paper is inspired by \cite{Cormode} in the univariate context, which considered biased quantile estimation and modified $\epsilon$-approximate quantile summaries to refine the error sufficiently at the tails, at the expense of error not at the tails. This will, as is apparent from the error analysis later in the paper (see Sec. \ref{sec:theoreticalanalysis}), guarantee that the error from the approximations of the tail dependence coefficients stay fixed as the number of elements in the stream is increased.

\subsection{Modifications to the copula summary}

\label{sec:modificationsummary}

This section details the specific modifications to the $\epsilon$-approximate quantile summary introduced in Sec. \ref{sec:streamingdata}, and therefore the copula summary, in order to obtain a suitable approximation to tail dependence coefficients. The proposed algorithm in \cite{Cormode} constructed a summary to maintain an approximation, with guaranteed error bounds, to the `biased' quantiles $u^l$, for $l=1,$ 2, $\ldots$, and $u \in [0,1]$. An approximation to the biased quantiles should have error relative to the quantile query, such that the approximation to $u^l$ should have an error of $\pm \epsilon u^l$ (or if required, $\pm \epsilon (1-u^l)$ by symmetry) rather than the uniform error of $\pm \epsilon$ from the standard quantile summary. This relative error allows one to refine the quantile approximation within the tails of an univariate distribution, and therefore is suited to the problem considered in this paper. We modify the construction of the summary proposed in \cite{Cormode} slightly to allow an error of $\pm \epsilon \min(u,1-u)$, assuming $l=1$, given that we would like to approximate both tails with relative error. On this note, define a $\epsilon u$-approximate quantile summary to be an approximation to $\hat{F}_{n,(k)}^{-1}(u)=\tilde{x}^{\ceil{un}}_{(k)}$, for $k=1,2$, which returns the value $\tilde{x}^{j}$, where \begin{equation}j \in [\ceil{un}-\epsilon \min(u, (1-u))n, \ceil{un}+\epsilon \min(u, (1-u)) n].\label{equation:conditionforuapprox}\end{equation}

Recall that the copula summary is composed of $L+1$ quantile summaries: $\big\{S_{(1)},$ $S_{(2)}^1,\ldots,$ $S_{(2)}^{L}\big\}$. Proved later in Sec. \ref{sec:theoreticalanalysis}, it will suffice to let the summaries $\big\{S_{(1)},$ $S_{(2)}^1,\ldots,$ $S_{(2)}^{L}\big\}$ be modified into $\epsilon u$-approximate summaries in order to obtain suitable approximations to the tail dependence coefficients. To modify them, the manner in which each summary is maintained and queried (through Sec. \ref{sec:inserting}, \ref{sec:combining} and \ref{sec:querying}) is changed from that explained in \cite{Gregory}. First, the basic make-up of each quantile summary $Q$ remains the same from the initial summary proposed back in \cite{Greenwald}. Each summary is made up of tuples, $Q=\big\{(v_i,g^i,\Delta^i)\big\}_{i=1}^{L}$. The values $v_i$, where $v_{i-1} \leq v_{i} < v_{i+1}$, are elements that have been seen in the data stream $\big\{x_{(k)}^i\big\}_{i=1}^n$, for $k=1,2$, so far. These values are maintained by the summary as `cover' for a range of quantiles that one may wish to query. I.e. the value $v_i$ will be returned as an approximation to a nearby quantile. The values $g^i$ and $\Delta^i$ control the range of quantiles that the value $v_i$ is returned as an approximation to. They do this by governing the minimum, $r_{min,Q}(v_i)$, and maximum, $r_{max,Q}(v_i)$ rank that the value $v_i$ takes in the original data stream. We define
\begin{equation}
g^i=r_{min,Q}(v_i)-r_{min,Q}(v_{i-1}),
\label{equation:gis}
\end{equation}
and then,
$$
r_{min,Q}(v_i)=\sum^i_{j=1}g^j, \qquad r_{max,Q}(v_i)=r_{min,Q}(v_i)+\Delta^i.
$$
One also knows the length of the data stream at any one time via $n=\sum^L_{j=1} g^j$. In order to guarantee that the $\epsilon u$-approximate quantile summary maintains an approximation which satisfies (\ref{equation:conditionforuapprox}), these minimum and maximum ranks must satisfy \citep{Cormode},
\begin{equation}
r_{max,Q}(v_{i+1})-r_{min,Q}(v_i) \leq 2\epsilon \min\left( r_{min,Q}(v_i), n-r_{max,Q}(v_{i+1})\right).
\label{equation:empiricalrankcondition}
\end{equation}
See Appendix \ref{sec:proofofnewinvariant} for a proof of this.

In the copula summary, each of the \textit{subsummaries} $S_{(2)}^1$, $S_{(2)}^2,\ldots$, $S_{(2)}^L$ corresponds to an element inside of $S_{(1)}$ (therefore $S_{(1)}$ has the cardinality $L$). Whilst the summary $S_{(1)}$ contains elements (and information about their ranks) from the first component of the data stream, i.e. $\big\{x_{(1)}^i\big\}_{i=1}^n$, the summaries $S_{(2)}^1$, $S_{(2)}^2,\ldots$, $S_{(2)}^L$ contain elements (and information about their ranks) from the second component of the data stream, i.e. $\big\{x_{(2)}^{i}\big\}_{i=1}^{n}$. On this note, let a tuple in $S_{(1)}$ be denoted by $(v_i,g_{(1)}^i,\Delta_{(1)}^i)$ where $v_i \in \big\{x_{(1)}^i\big\}_{i=1}^{n}$, for $i=1,\ldots,L$, and let a tuple in $S_{(2)}^i$ be denoted by $(w_i,g_{(2)}^{i,j},\Delta_{(2)}^{i,j})$ where $w_i \in \big\{x_{(2)}^{i}\big\}_{i=1}^{n}$, for $j=1,\ldots,L_{i}$ (therefore $S_{(2)}^i$ has the cardinality $L_i$). The parameters within these summaries are changed carefully over time as new elements are added to the data stream $\big\{x_{(1)}^i,x_{(2)}^i\big\}_{i=1}^{n}$ via the operations defined in the following three sections. It is important to remember that the proposed methodology here can only be used with data streams where $x_{(1)}^i$ and $x_{(2)}^i$ are acquired at the same frequency and the same time. This is because the operations described in the following sections take a pair of data points as an input. Considering the case where the data streams components are acquired at different rates is outside the scope of this paper; however this could be achieved by modifying the times at which the summary $S_{(1)}$ and the subsummaries $S_{(2)}^1,S_{(2)}^2,\ldots,S_{(2)}^L$ get updated using the operations described below.

\subsection{Inserting an element into the copula summary}
\label{sec:inserting}

When the element $(x_{(1)}^{n+1},x_{(2)}^{n+1})$ enters the data stream, it should be inserted into the copula summary. This is done by inserting the tuple $(x_{(1)}^{n+1},1,\Delta_{(1)}^*)$ into $S_{(1)}$. If $x_{(1)}^{n+1} < v_1$, then we insert the tuple at the start of $S_{(1)}$, and let $\Delta_{(1)}^*=1$. Conversely if $x_{(1)}^{n+1} \geq v_L$, then we insert the tuple at the end of $S_{(1)}$ and let $\Delta_{(1)}^*=1$. If $v_i \leq x_{(1)}^{n+1} <v_{i+1}$, then insert the tuple in between $v_i$ and $v_{i+1}$ and let $\Delta_{(1)}^*=\min(g_{(1)}^i+\Delta_{(1)}^i-1, g_{(1)}^{i+1}+\Delta_{(1)}^{i+1}-1)$. Now for the second component of the new element, let $S_{(2)}^{*}=\big\{(x_{(2)}^{n+1},1,0)\big\}$ be a new quantile summary.  This summary gets inserted between $S_{(2)}^{i}$ and $S_{(2)}^{i+1}$ in the copula summary $\big\{S_{(1)},S_{(2)}^{1},\ldots,S_{(2)}^{L}\big\}$ if $v_i \leq x_{(1)}^{n+1} < v_{i+1}$, between $S_{(1)}$ and $S_{(2)}^{1}$ if $x_{(1)}^{n+1}<v_1$ or at the end of the copula summary if $x_{(1)}^{n+1} \geq v_{L}$. Finally, increase $L$ by 1.

\subsection{Combining tuples in the copula summary}
\label{sec:combining}

Combining the tuples in the summaries $\big\{S_{(1)},S_{(2)}^1,\ldots,S_{(2)}^{L}\big\}$ is occasionally required to remove unnecessary tuples from the summaries, whilst maintaining the elements required for the approximation to be of the desired accuracy. Providing that $L > 3$, sequentially for each element $i \in [3, L - 1]$ in $S_{(1)}$ we find the index $j \in [2,i]$ satisfying
\begin{equation}
\text{arg} \min_{j}\Bigg\{\sum^{i}_{k=j}g_{(1)}^k+\Delta_{(1)}^i \leq 2\epsilon\min\left( r_{min,S_{(1)}}(v_{j-1}),n-r_{max,S_{(1)}}(v_{i})\right) =  2 \epsilon \min\left(\sum^{j-1}_{k=1}g_{(1)}^{k},n-\sum^{i}_{k=1}g_{(1)}^{k}-\Delta_{(1)}^{i}\right)\Bigg\}.
\label{equation:combinationcondition}
\end{equation}
Once this value is found, the tuples $(v_j, g_{(1)}^j, \Delta_{(1)}^j), \ldots,$ $(v_i,g_{(1)}^i,\Delta_{(1)}^i)$ can be combined into the new tuple $(v_i,\sum^i_{k=j}g_{(1)}^k,\Delta_{(1)}^i)$. We use the condition on $j$ in (\ref{equation:combinationcondition}) in order to guarantee (\ref{equation:empiricalrankcondition}) is satisfied. In addition to combining those tuples, we \textit{merge} the tuples $S_{(2)}^j,\ldots,$ $S_{(2)}^i$ into a new tuple $M(S_{(2)}^j,\ldots,S_{(2)}^i)$. See Sec. \ref{sec:mergingproof} for the implementation of this merging and the bound of the approximation error. Finally combine unnecesary tuples (e.g. $(w_\cdot,g_{(2)}^{i,\cdot},\Delta_{(2)}^{i,\cdot})$) inside of this merged summary $M(S_{(2)}^j,\ldots,S_{(2)}^i)$, in the manner described earlier in this section. Insert this new summary in the place of $S_{(2)}^j,\ldots,$ $S_{(2)}^i$ in the copula summary, such that the copula summary now is $\big\{S_{(1)},S_{(2)}^1,\ldots,S_{(2)}^{j-1}, M(S_{(2)}^j,\ldots,S_{(2)}^i), S_{(2)}^{i+1},\ldots\big\}$.

\subsection{Querying the copula summary}
\label{sec:querying}

This section now describes how to query the copula summary (maintained over time using the operations in Sec. \ref{sec:inserting} and \ref{sec:combining}) for an approximation to $\hat{C}(u_1,u_2)$. We will denote this approximation by $\tilde{\tilde{C}}(u_1,u_2)$, as opposed to the approximation from the copula summary proposed in \cite{Gregory} composed of $\epsilon$-approximate quantile summaries, $\tilde{C}(u_1,u_2)$. First, we compute the approximation to the empirical quantile function $\hat{F}^{-1}_{n,(1)}(u_1)=\tilde{x}_{(1)}^{\ceil*{u_1 n}}$ using the $\epsilon u$-approximate quantile summary $S_{(1)}$; denote this approximation by $\tilde{\tilde{F}}^{-1}_{n,(1)}(u_1)$. Let $E$ be equal to the value of $i$ that satisfies $\tilde{\tilde{F}}_{n,(1)}^{-1}(u_1)=v_i$, and find the total number of elements in the stream that have entered into the first $E$ subsummaries $S_{(2)}^1,\ldots,$ $S_{(2)}^E$,
\begin{equation}
\hat{n}_1=\sum^E_{i=1}g_{(1)}^{i}=\sum^{E}_{i=1}\sum^{L_i}_{j=1}g_{(2)}^{i,j}.
\label{equation:nhat}
\end{equation}
Suppose that the indices of the $\hat{n}_1$ elements to have entered into the first $E$ subsummaries form the set $\hat{I} \subset [1,n]$; note this is an approximation to the set $I$ introduced in Sec. \ref{sec:empiricalcopulas}.

Next, let $M(S_{(2)}^1,\ldots,S_{(2)}^L)$ be a merged summary composed of all the subsummaries $S_{(2)}^1,\ldots,$ $S_{(2)}^L$ (again for the implementation details of this merge, see Sec. \ref{sec:mergingproof}). This summary can be queried for an approximation to $\hat{F}_{n,(2)}^{-1}(u_2)$; denote this approximation by $\tilde{\tilde{F}}_{n,(2)}^{-1}(u_2)$. Finally, let $M(S_{(2)}^1,\ldots,S_{(2)}^E)$ be a merged summary of the subsummaries $S_{(2)}^1,\ldots,$ $S_{(2)}^E$. Then define the approximation $\tilde{\tilde{F}}_{\hat{n}_1,(2)}(y)$ to the empirical CDF $$\hat{F}_{\hat{n}_1,(2)}(y)=\frac{1}{\hat{n}_1}\sum^{\hat{n}_1}_{i=1}1_{\tilde{x}^{\hat{I}(i)}_{(2)} \leq y},$$ to be an \textit{inverse} query of the summary $M(S_{(2)}^1,\ldots,S_{(2)}^E)$. The implementation details of this query, and a guarantee on it's error with respect to the empirical CDF, is provided later in Sec. \ref{sec:inversequeryapprox}. In total, the copula summary approximation $\tilde{\tilde{C}}(u_1,u_2)$, to the empirical copula function $\hat{C}(u_1,u_2)$ is given by,
\begin{equation}
\tilde{\tilde{C}}(u_1,u_2)=\frac{\hat{n}_1}{n}\tilde{\tilde{F}}_{\hat{n}_1,(2)}\left(\tilde{\tilde{F}}_{n,(2)}^{-1}(u_2)\right).
\label{equation:copulaapproxu}
\end{equation}

%Hence we now no longer expect all subsummaries $S_{(2)}^{1},\ldots$, $S_{(2)}^L$ to be of approximately the same length; this will instead depend on where the rank of the first component elements, corresponding to each elements within the subsummaries, fall relative to others in the stream $\big\{x_{(1)}^i\big\}_{i=1}^n$. 

\section{Theoretical analysis of the approximation}

\label{sec:theoreticalanalysis}

This section provides a theoretical analysis of the approximation in (\ref{equation:copulaapproxu}) to the empirical copula function, and the resulting approximations to the empirical tail dependence coefficients. First, it is important to clarify error bounds for merged $\epsilon u$ - approximate quantile summaries, and an inverse query of a $\epsilon u$ - approximate quantile summary. Recall from (\ref{equation:empiricalrankcondition}) that the summary $Q$ is a $\epsilon u$-approximate quantile summary if two neighbouring elements $v_{i+1}$ and $v_i$ in $Q$ satisfy,
\begin{equation}
r_{max,Q}(v_{i+1})-r_{min,Q}(v_i) \leq 2 \epsilon \min\left( r_{min,Q}(v_i), n-r_{max,Q}(v_{i+1})\right).
\label{equation:newcondition}
\end{equation}

The next two sections cover two preliminary bounds, before Sec. \ref{sec:errorboundtail} outlines the error bound of the approximation to the tail dependence coefficients using the modified copula summary, proposed in this paper.

\subsection{Merging $\epsilon u$-approximate quantile summaries}

\label{sec:mergingproof}

We recall from \cite{GreenwaldMerge} that one can merge $\epsilon$-approximate quantile summaries $Q_1$ (length $L_1$) and $Q_2$ (length $L_2$) to obtain the quantile summary $M(Q_1,Q_2)$, containing the elements $Q_1 \bigcup Q_2$, which is also $\epsilon$-approximate itself. It does this via the following method. Suppose $z_k$, for $k \in [1, L_1 + L_2]$, is an element from $Q_1$ which exists in $M(Q_1,Q_2)$. If it exists, let $w_1$ be the largest element in $Q_2$ that is less than or equal to $z_k$. Also, if it exists, let $w_2$ be the smallest element in $Q_2$ that is greater than $z_k$. Then set,
\begin{equation}
r_{min,M(Q_1,Q_2)}(z_k)=
\begin{cases}
r_{min,Q_2}(w_1) + r_{min,Q_1}(z_k), & \text{if } w_1 \text{ exists}\\
r_{min,Q_1}(z_k), & \text{otherwise},
\end{cases}
\end{equation}
and
\begin{equation}
r_{max,M(Q_1,Q_2)}(z_k)=
\begin{cases}
r_{max,Q_2}(w_2) + r_{max,Q_1}(z_k) - 1, & \text{if } w_2 \text{ exists}\\
r_{max,Q_2}(w_1) + r_{max,Q_1}(z_k), & \text{otherwise}.
\end{cases}
\end{equation}
It will now be proved that if $Q_1$ and $Q_2$ are $\epsilon u$-approximate then $M(Q_1,Q_2)$ is a $\epsilon u$-approximate summary as well. It is therefore necessary to show that $$r_{max,M(Q_1,Q_2)}(z_{k+1})-r_{min,M(Q_1,Q_2)}(z_k)\leq 2 \epsilon \min\left(r_{min,M(Q_1,Q_2)}(z_k),n-r_{max,M(Q_1,Q_2)}(z_{k+1})\right).$$ Suppose $Q_1$ and $Q_2$ were constructed over data streams of length $n_1$ and $n_2$ respectively.
For the case where $z_k$ and $z_{k+1}$ are from the same summary, let them equal $x_1$ and $x_{2}$ (say in the summary $Q_1$). If there exists both the elements $w_1$ (largest element in $Q_2$ that is less than or equal to $z_k$) and $w_2$ (smallest element in $Q_2$ that is greater than $z_{k+1}$) then we know that $w_1$ and $w_2$ are consecutive elements in $Q_2$. Thus,
\begin{equation}
\begin{split}
r_{max,M(Q_1,Q_2)}(z_{k+1})-r_{min,M(Q_1,Q_2)}(z_k) &\leq (r_{max,Q_1}(x_{2})-r_{min,Q_1}(x_{1})) + (r_{max,Q_2}(w_2)-r_{min,Q_2}(w_{1}) - 1) \\
\quad &\leq 2 \epsilon \big(\min\left(r_{min,Q_1}(x_{1}),n_1-r_{max,Q_1}(x_{2})\right)+\ldots\\
\quad & \ldots \min\left(r_{min,Q_2}(w_1),n_2-r_{max,Q_2}(w_2)\right)\big)\\
\quad &\leq2 \epsilon \min\left(r_{min,M(Q_1,Q_2)}(z_{k}),(n_1+n_2)-r_{max,M(Q_1,Q_2)}(z_{k+1})\right).
\end{split}
\end{equation}

If $w_1$ doesn't exist, then
\begin{equation*}
\begin{split}
r_{max,M(Q_1,Q_2)}(z_{k+1})-r_{min,M(Q_1,Q_2)}(z_k) &\leq (r_{max,Q_1}(x_2) - r_{min,Q_1}(x_1)) \leq 2 \epsilon \min\left(r_{min,Q_1}(x_1),n_1-r_{max,Q_1}(x_2)\right) \\
\quad &\leq  2\epsilon \min \left(r_{min,M(Q_1,Q_2)}(z_k), (n_1+n_2)-r_{max,M(Q_1,Q_2)}(z_{k+1})\right),
\end{split}
\end{equation*}
as $r_{max,Q_2}(w_2)=1$ and if $w_2$ doesn't exist, then
\begin{equation*}
\begin{split}
r_{max,M(Q_1,Q_2)}(z_{k+1})-r_{min,M(Q_1,Q_2)}(z_k) &\leq (r_{max,Q_2}(w_1) + r_{max,Q_1}(x_2)) - (r_{min,Q_2}(w_1)) + r_{min,Q_1}(x_1)) \\
\quad &\leq 2 \epsilon \big(\min\left(r_{min,Q_1}(x_{1}), n_1-r_{max,Q_1}(x_2)\right)+\ldots\\
\quad &\min\left(r_{min,Q_2}(w_1),n_2-r_{max,Q_2}(w_1)\right)\big)\\
\quad &\leq 2 \epsilon \min\left(r_{min,M(Q_1,Q_2)}(z_{k}), (n_1+n_2)-r_{max,M(Q_1,Q_2)}(z_{k+1})\right).
\end{split}\end{equation*}
For the case where $z_k$ and $z_{k+1}$ come from different summaries, say $z_k$ from $Q_1$ labelled by $x_1$, and $z_{k+1}$ from $Q_2$ labelled by $w_2$. Then w.l.o.g. let $x_2$ be the smallest element in $Q_1$ greater than $w_2$, and $w_1$ be the largest element in $Q_2$ less than or equal to $x_1$. Then,
\begin{equation*}
\begin{split}
r_{max,M(Q_1,Q_2)}(z_{k+1})-r_{min,M(Q_1,Q_2)}(z_k) &\leq (r_{max,Q_1}(x_2)+r_{max,Q_2}(w_2)-1)-(r_{min,Q_1}(x_1)+r_{min,Q_2}(w_1))\\
\quad &\leq (r_{max,Q_1}(x_2)-r_{min,Q_1}(x_1)-1)-(r_{max,Q_2}(w_2)-r_{min,Q_2}(w_1)) \\
\quad & \leq 2 \epsilon \big(\min\left(r_{min,Q_1}(x_{1}),n_1-r_{max,Q_1}(x_2)\right)+\ldots\\
\quad &\min\left(r_{min,Q_2}(w_1),n_2-r_{max,Q_2}(w_2)\right)\big)\\
\quad &\leq 2 \epsilon \min\left(r_{min,M(Q_1,Q_2)}(z_{k}),(n_1+n_2)-r_{max,M(Q_1,Q_2)}(z_{k+1})\right). 
\end{split}\end{equation*}
We now have the condition for a $\epsilon u$-approximate summary in (\ref{equation:newcondition}) for all cases of membership to $Q_1$ and $Q_2$ for the elements $z_k$ and $z_{k+1}$. Therefore any $\epsilon u$-approximate summaries merged together will also be $\epsilon u$-approximate too.

\subsection{Inversely querying $\epsilon u$-approximate quantile summaries}

\label{sec:inversequeryapprox}

In this section, we would like to bound the approximation $\tilde{\tilde{F}}_{n,(k)}(x)$, for $x \in \mathbb{R}$ and $k = 1,2$, to the empirical CDF $\hat{F}_{n,(k)}(x)$ using a $\epsilon u$-approximate quantile summary $Q$ of the data stream $\big\{x^i_{(k)}\big\}_{i=1}^{n}$. This is a simple extension to the proof in \cite{Lall} for inversely querying a $\epsilon$-approximate quantile summary. Firstly, let $\hat{F}_{n,(k)}(x)=i/n$, meaning $\tilde{x}^{i}_{(k)} \leq x < \tilde{x}^{i+1}_{(k)}$, where $\tilde{x}^0_{(k)}=-\infty$ and $\tilde{x}^{n+1}_{(k)}=\infty$. Let $j\in [1,L]$, then we know that using the quantile summary we keep an approximation, $\tilde{x}^{i_j}_{(k)}$, to the $i$'th order statistic of the data stream; this approximation satisfies $\tilde{x}^{i_j}_{(k)} \leq \tilde{x}^{i}_{(k)} \leq x < \tilde{x}^{i+1}_{(k)} \leq \tilde{x}^{i_{j+1}}_{(k)}$. Note that as the summary is $\epsilon u$-approximate, we have $\abs{i_j - i} \leq 2 \epsilon \min(i, n-i)$. Also recall from Sec. \ref{sec:modificationsummary} that we can only access the minimum and maximum values that $i_j$ can take, and not actually $i_j$ itself. To find an approximation to $i_j$ we can simply search all values $v_l$ in $Q$, for $l=1, \ldots, L$, for the $l$ that satisfies $v_l \leq x < v_{l+1}$ (with $v_{L+1}=\infty$) and take $(r_{min,Q}(v_l)+r_{max,Q}(v_l))/2$ as $\hat{i}_j$ as the approximation to $i_j$. If $x < v_1$, of course take $\hat{i}_j=0$. Now we know that $r_{max,Q}(v_l)-r_{min,Q}(v_l) \leq r_{max,Q}(v_l) - r_{min,Q}(v_{l-1}) \leq  2 \epsilon \min(i,n-i)$, and therefore $\abs{\hat{i}_j-i_j} \leq \epsilon \min(i,n-i)$. Due to the triangle inequality we have that $\abs{\hat{i}_j-i} \leq 3 \epsilon \min(i,n-i)$ and finally that
\begin{equation}
\abs{\hat{F}_{n,(k)}(x)-\tilde{\tilde{F}}_{n,(k)}(x)} \leq 3 \epsilon \min(i,n-i) /n = 3 \epsilon \min\left(\hat{F}_{n,(k)}(x),1-\hat{F}_{n,(k)}(x)\right).
\label{equation:boundforinverse}
\end{equation}

\subsection{Bounding the error of the approximation to the tail dependence coefficients}

\label{sec:errorboundtail}

%From Gregory (2018), the bound on $\abs{\hat{C}(i/n,i/n)-\tilde{C}(i/n,i/n)}$ was $5\epsilon$. Therefore for a bound on $\tilde{\lambda}_{L}(i/n)$, an approximation to the empirical lower dependence coefficient function for a given $i$,
%$$
%\abs{\hat{\lambda}_{L}(i/n)-\tilde{\lambda}_{L}(i/n)}\leq \frac{5\epsilon n}{i},
%$$ 
%which grows linearly with $n$ for a fixed $i$.

%However,

Now that we have covered some necessary bounds, we can derive the guaranteed error bound of the modified copula summary and therefore the tail dependence coefficients approximations. Recall that the $\epsilon u$-approximate empirical copula approximation is given by,
\begin{equation}
\tilde{\tilde{C}}(u_1,u_2) = \frac{\hat{n}_1}{n}\tilde{\tilde{F}}_{\hat{n}_1,(2)}(\tilde{\tilde{F}}^{-1}_{n,(2)}(u_2)),
\label{equation:newcopulaapprox}
\end{equation}
therefore the approximation to the lower tail dependence coefficient (for a fixed $i$) is given by,
\begin{equation}
\tilde{\lambda}_{L}(i/n) = \frac{\tilde{\tilde{C}}(i/n,i/n)}{i/n},
\label{equation:newapproximationtaildependence}
\end{equation}
and the approximation to the upper tail dependence coefficient (for a fixed $j=n-i$) is given by,
\begin{equation}
\tilde{\lambda}_{U}(i/n) = \frac{1-(2i/n)+\tilde{\tilde{C}}(i/n,i/n)}{1-(i/n)}.
\label{equation:newapproximationuppertaildependence}
\end{equation}
\begin{theorem}

Let $\tilde{\tilde{C}}(i/n,i/n)$ denote the $\epsilon u$-approximate copula summary approximation, given in (\ref{equation:newcopulaapprox}), to the bivariate empirical copula $\hat{C}(i/n,i/n)$, then one can bound the error of this approximation by,
\begin{equation}
\abs*{\tilde{\tilde{C}}(i/n,i/n)-\hat{C}(i/n,i/n)} \leq \frac{\epsilon \min(i,n-i)) (8 + 9\epsilon)}{n}.
\label{equation:modifiedcopulabound}
\end{equation}
Therefore the approximation to the lower and upper tail dependence functions in (\ref{equation:newapproximationtaildependence}) and (\ref{equation:newapproximationuppertaildependence}) can be bounded by,
\begin{equation}
\abs*{\tilde{\lambda}_{L}(i/n)-\hat{\lambda}_L(i/n)} \leq \epsilon (8 + 9\epsilon), \quad i \leq \ceil{n/2},
\label{equation:taildependencebound}
\end{equation}
and
\begin{equation}
\abs*{\tilde{\lambda}_{U}(i/n)-\hat{\lambda}_U(i/n)} \leq \epsilon (8 + 9\epsilon), \quad i>\ceil{n/2},
\label{equation:uppertaildependencebound}
\end{equation}
respectively. The approximation error is therefore constant with increasing $n$.
\end{theorem}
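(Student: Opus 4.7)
The plan is to establish the empirical-copula error bound (\ref{equation:modifiedcopulabound}) first; once this is in hand, the two tail-dependence bounds follow immediately. For (\ref{equation:taildependencebound}), dividing (\ref{equation:modifiedcopulabound}) by $i/n$ and observing that $\min(i, n-i)/i = 1$ whenever $i \le \ceil*{n/2}$ yields exactly $\epsilon(8 + 9\epsilon)$, which is independent of $n$ and so stream-length invariant; (\ref{equation:uppertaildependencebound}) follows analogously by dividing by $(n-i)/n$ and using $\min(i, n-i)/(n-i) = 1$ whenever $i > \ceil*{n/2}$.

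To establish (\ref{equation:modifiedcopulabound}), write $u = i/n$, $A = \hat{n}_1/n$, $B = n_1/n$, $P = \tilde{\tilde{F}}_{\hat{n}_1,(2)}(\tilde{\tilde{F}}_{n,(2)}^{-1}(u))$ and $Q = \hat{F}_{n_1,(2)}(\hat{F}_{n,(2)}^{-1}(u))$, and use the product-splitting identity $AP - BQ = (A - B)Q + A(P - Q)$. The first summand is controlled directly by the $\epsilon u$-approximate property of $S_{(1)}$: a quantile query at $u_1 = u$ returns $v_E$ whose rank $\hat{n}_1$ satisfies $\abs*{\hat{n}_1 - n_1} \le 2\epsilon \min(i, n-i)$ by (\ref{equation:empiricalrankcondition}), so together with $Q \le 1$ this contributes a first-order term of size at most $2\epsilon \min(i, n-i)/n$.

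For the second summand I would insert the intermediate quantity $\hat{F}_{\hat{n}_1,(2)}(\tilde{\tilde{F}}_{n,(2)}^{-1}(u))$ and split $P - Q$ accordingly. The difference $\tilde{\tilde{F}}_{\hat{n}_1,(2)} - \hat{F}_{\hat{n}_1,(2)}$ evaluated at $\tilde{\tilde{F}}_{n,(2)}^{-1}(u)$ is bounded by (\ref{equation:boundforinverse}) applied to the merged subsummary $M(S_{(2)}^1, \ldots, S_{(2)}^E)$, which by Section \ref{sec:mergingproof} is itself $\epsilon u$-approximate; this yields a $3\epsilon \min(\hat{F}, 1-\hat{F})$ contribution. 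The remaining difference I would further split by (i) swapping the index set $\hat{I}$ back to $I$, incurring an error controlled by $\abs*{\hat{n}_1 - n_1}/\hat{n}_1$, and (ii) swapping the argument from $\tilde{\tilde{F}}_{n,(2)}^{-1}(u)$ back to $\hat{F}_{n,(2)}^{-1}(u)$, where the $\epsilon u$-approximate property of $M(S_{(2)}^1, \ldots, S_{(2)}^L)$ guarantees the rank shift is at most $2\epsilon \min(i, n-i)$. Multiplying through by $A \le (i + 2\epsilon \min(i, n-i))/n$ and aggregating the linear $\epsilon$ contributions gives $8\epsilon \min(i, n-i)/n$, while the cross-products of two $\epsilon$-scale errors — notably the factor $A - B$ multiplied into each of the first-order terms inside $P - Q$ — contribute at most $9\epsilon^2 \min(i, n-i)/n$, matching (\ref{equation:modifiedcopulabound}).

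The main obstacle will be careful bookkeeping across several shifting reference measures: the inverse-query bound of Section \ref{sec:inversequeryapprox} is phrased against $\hat{F}_{\hat{n}_1,(2)}$ rather than $\hat{F}_{n_1,(2)}$; the index sets $I$ and $\hat{I}$ differ in up to $O(\epsilon \min(i, n-i))$ elements; and two distinct merged summaries are used in the query, namely $M(S_{(2)}^1, \ldots, S_{(2)}^E)$ for the inverse query and $M(S_{(2)}^1, \ldots, S_{(2)}^L)$ for the direct quantile query. Each substitution must be carried out without double-counting an error source, and the precise constants $8$ and $9$ only emerge once every first- and second-order contribution has been aggregated.
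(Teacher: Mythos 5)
Your overall skeleton is the right one and is essentially the paper's: a triangle-inequality (or equivalent product-splitting) decomposition into (i) the inverse-query error of $M(S_{(2)}^1,\ldots,S_{(2)}^E)$, (ii) the error from replacing $\hat{F}^{-1}_{n,(2)}(i/n)$ by $\tilde{\tilde{F}}^{-1}_{n,(2)}(i/n)$, and (iii) the error from replacing $n_1$, $I$ by $\hat{n}_1$, $\hat{I}$, followed by dividing by $i/n$ or $(n-i)/n$ and using $\min(i,n-i)=i$ for $i\leq\ceil*{n/2}$ (resp.\ $n-i$ for $i>\ceil*{n/2}$). That last reduction is fine. However, there is a genuine gap at the heart of the argument: you leave the inverse-query contribution in the form $3\epsilon\min\bigl(D,1-D\bigr)$ with $D=\hat{F}_{\hat{n}_1,(2)}(\tilde{\tilde{F}}^{-1}_{n,(2)}(i/n))$, and never convert this into a quantity of order $\min(i,n-i)/\hat{n}_1$. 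This conversion is the crux of the paper's Theorem 2: one needs the upper bound $D\leq (i+\epsilon\min(i,n-i))/\hat{n}_1$ (easy, since $\hat{I}\subset[1,n]$) \emph{and} the lower bound $D\geq\max\bigl((2i-n-3\epsilon\min(i,n-i))/\hat{n}_1,0\bigr)$, which rests on the bivariate counting fact that even under perfect negative rank correlation at least $\max(2i-n,0)$ indices $j$ satisfy both $x_{(1)}^j\leq\tilde{x}_{(1)}^{i}$ and $x_{(2)}^j\leq\tilde{x}_{(2)}^{i}$. Without this lower bound, in the upper tail ($i$ close to $n$, $\hat{n}_1\approx i$) the factor $\min(D,1-D)$ can only be bounded by $1/2$, so the term is $O(\epsilon)$ rather than $O(\epsilon(n-i)/n)$, and after dividing by $(n-i)/n$ the bound (\ref{equation:uppertaildependencebound}) would grow like $\epsilon n/(n-i)$ instead of being stream-length invariant. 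The $\epsilon u$-approximate property of the subsummaries alone does not give you this; it is a property of the copula structure, not of the quantile summaries.

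Your constant bookkeeping also does not close as stated. The query guarantee of an $\epsilon u$-approximate summary is $\pm\epsilon\min(i,n-i)$ in rank (the $2\epsilon$ in (\ref{equation:empiricalrankcondition}) is the internal tuple-gap invariant, halved at query time), so using $2\epsilon\min(i,n-i)$ for $\abs*{\hat{n}_1-n_1}$ and again for the argument swap makes your linear terms overshoot: your pieces sum to at least $(2+1+1)\epsilon\min(i,n-i)/n$ before the inverse-query term, whereas the paper obtains $\epsilon\min(i,n-i)/n$ for each of terms (B) and (C) and $(6+9\epsilon)\epsilon\min(i,n-i)/n$ for term (A), giving exactly $8+9\epsilon$. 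Moreover, the $9\epsilon^2$ does not come from cross-products of $(A-B)$ with first-order terms (in your identity $AP-BQ=(A-B)Q+A(P-Q)$ no such cross term exists); it arises inside the bound for term (A), where the outer factor $3\epsilon$ multiplies the $3\epsilon\min(i,n-i)$ corrections appearing in the upper bounds on $D$ and $1-D$. To repair the proof you need to insert the explicit two-sided bounding of $D$ (including the $\max(2i-n,0)$ counting argument) and redo the aggregation with the $\pm\epsilon\min(i,n-i)$ query guarantees.
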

To prove this bound, we shall follow the steps of the proof in \cite{Gregory} with some modifications.
\begin{proof}
We shall split the error
$$
\abs*{\tilde{\tilde{C}}(i/n,i/n)-\hat{C}(i/n,i/n)}=\abs*{\frac{\hat{n}_1}{n}\tilde{\tilde{F}}_{\hat{n}_1,(2)}(\tilde{\tilde{F}}^{-1}_{n,(2)}(i/n))-\frac{n_1}{n}\hat{F}_{n_1,(2)}(\hat{F}^{-1}_{n,(2)}(i/n))},
$$
into three contributing parts via the triangle inequality and prove each individually,
\begin{equation}
\begin{split}
\abs*{\frac{\hat{n}_1}{n}\tilde{\tilde{F}}_{\hat{n}_1,(2)}(\tilde{\tilde{F}}^{-1}_{n,(2)}(i/n))-\frac{n_1}{n}\hat{F}_{n_1,(2)}(\hat{F}^{-1}_{n,(2)}(i/n))} \leq & \underbrace{\abs*{\frac{\hat{n}_1}{n}\tilde{\tilde{F}}_{\hat{n}_1,(2)}(\tilde{\tilde{F}}^{-1}_{n,(2)}(i/n))-\frac{\hat{n}_1}{n}\hat{F}_{\hat{n}_1,(2)}(\tilde{\tilde{F}}^{-1}_{n,(2)}(i/n))}}_{\text{A}}+\\
\quad &  \underbrace{\abs*{\frac{\hat{n}_1}{n}\hat{F}_{\hat{n}_1,(2)}(\tilde{\tilde{F}}^{-1}_{n,(2)}(i/n))-\frac{\hat{n}_1}{n}\hat{F}_{\hat{n}_1,(2)}(\hat{F}^{-1}_{n,(2)}(i/n))}}_{\text{B}}+ \\
\quad & \underbrace{\abs*{\frac{\hat{n}_1}{n}\hat{F}_{\hat{n}_1,(2)}(\hat{F}^{-1}_{n,(2)}(i/n))-\frac{n_1}{n}\hat{F}_{n_1,(2)}(\hat{F}^{-1}_{n,(2)}(i/n))}}_{\text{C}}.
\end{split}
\label{equation:contributing}
\end{equation}
\end{proof}
The first contributing part in (\ref{equation:contributing}) corresponds to the error associated with using an inverse query of the $\epsilon u$-approximate quantile summary $M(S_{(2)}^1,\ldots,S_{(2)}^E)$, instead of the empirical CDF $\hat{F}_{\hat{n}_1, (2)}$.
\begin{theorem}
The term (A) in (\ref{equation:contributing}) can be bounded by,
$$
\abs*{\frac{\hat{n}_1}{n}\tilde{\tilde{F}}_{\hat{n}_1,(2)}(\tilde{\tilde{F}}^{-1}_{n,(2)}(i/n))-\frac{\hat{n}_1}{n}\hat{F}_{\hat{n}_1,(2)}(\tilde{\tilde{F}}^{-1}_{n,(2)}(i/n))} \leq \frac{\epsilon \min(i,n-i) (6 + 9\epsilon)}{n}.
$$
\end{theorem}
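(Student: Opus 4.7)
The plan is to reduce term (A) to a single application of the inverse-query guarantee of Sec.~\ref{sec:inversequeryapprox}, and then to translate the resulting bound (which lives inside the size-$\hat{n}_1$ subset) into a bound in terms of $\min(i,n-i)$ on the full stream.

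First I would factor out $\hat{n}_1/n$, so that the target reduces to
$$
\frac{\hat{n}_1}{n}\,\bigl|\tilde{\tilde{F}}_{\hat{n}_1,(2)}(y) - \hat{F}_{\hat{n}_1,(2)}(y)\bigr|,\qquad y := \tilde{\tilde{F}}^{-1}_{n,(2)}(i/n).
$$
Each subsummary $S_{(2)}^k$ is $\epsilon u$-approximate by construction, so the merging result in Sec.~\ref{sec:mergingproof} guarantees that $M(S_{(2)}^1,\ldots,S_{(2)}^E)$ is also $\epsilon u$-approximate. Its total weight equals $\hat{n}_1$ by (\ref{equation:nhat}), so applying (\ref{equation:boundforinverse}) to this merged summary at the point $y$ yields
$$
\bigl|\tilde{\tilde{F}}_{\hat{n}_1,(2)}(y) - \hat{F}_{\hat{n}_1,(2)}(y)\bigr| \leq \frac{3\epsilon\,\min(m,\hat{n}_1-m)}{\hat{n}_1},
$$
where $m := \hat{n}_1\,\hat{F}_{\hat{n}_1,(2)}(y)$ is the number of elements of $\hat{I}$ whose second coordinate is at most $y$.

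The main obstacle is converting $\min(m,\hat{n}_1-m)$ into a bound governed by $\min(i,n-i)$, since both $\hat{n}_1$ and $m$ are data-dependent. Let $j$ denote the true rank of $y$ in the full second-coordinate stream. Since $\hat{I}\subset[1,n]$, every element counted by $m$ is also $\leq y$ when viewed inside the full stream, so $m\leq j$; symmetrically $\hat{n}_1-m\leq n-j$, and hence $\min(m,\hat{n}_1-m)\leq\min(j,n-j)$. Next, $y$ was returned by the $\epsilon u$-approximate summary $M(S_{(2)}^1,\ldots,S_{(2)}^L)$ (total weight $n$) queried at quantile $i/n$, so (\ref{equation:conditionforuapprox}) delivers $|j-i|\leq\epsilon\min(i,n-i)$. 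A short case split on whether $i\leq\lceil n/2\rceil$ or $i>\lceil n/2\rceil$ then gives $\min(j,n-j)\leq(1+\epsilon)\min(i,n-i)$.

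Assembling the pieces,
$$
(\text{A}) \;\leq\; \frac{3\epsilon(1+\epsilon)\min(i,n-i)}{n} \;=\; \frac{\epsilon\min(i,n-i)(3+3\epsilon)}{n},
$$
which sits comfortably inside the stated bound $\epsilon\min(i,n-i)(6+9\epsilon)/n$; the slack in the stated constants presumably reflects a looser write-up and some headroom for later combining (A), (B) and (C) into (\ref{equation:modifiedcopulabound}). I expect the only genuinely subtle step to be the rank-transfer inequality $m\leq j$: one must remember that $\hat{I}$ is selected by the first-coordinate summary $S_{(1)}$ yet is a \emph{genuine} subset of $[1,n]$, so its induced empirical CDF is dominated pointwise by the full-stream CDF, a fact that would fail if $\hat{I}$ were merely a perturbed proxy.
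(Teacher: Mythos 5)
Your proof is correct, and it actually yields a sharper constant, $\epsilon\min(i,n-i)(3+3\epsilon)/n$, which of course implies the stated bound. The opening coincides with the paper's: factor out $\hat{n}_1/n$ and apply the inverse-query guarantee (\ref{equation:boundforinverse}) to the merged $\epsilon u$-approximate summary $M(S_{(2)}^1,\ldots,S_{(2)}^E)$ of total weight $\hat{n}_1$, reducing the problem to bounding $\min(m,\hat{n}_1-m)$ with $m=\hat{n}_1\hat{F}_{\hat{n}_1,(2)}(y)$ (the paper phrases this as bounding $D$ and $1-D$). Where you genuinely diverge is in that bounding step. The paper controls $D$ from above by essentially your subset argument, but to control $1-D$ it derives a \emph{lower} bound on $D$ via a Fr\'echet-type inequality $n_1\hat{F}_{n_1,(2)}(\cdot)\geq\max(2i-n,0)$ combined with the $\epsilon u$-accuracy of $S_{(1)}$ (to relate $\hat{I}$ to $I$) and of the full merge (to shift the query point), which gives $\hat{n}_1-m\leq 2(n-i)+3\epsilon\min(i,n-i)$ and hence the constant $6+9\epsilon$. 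You instead exploit the fact that $\hat{I}\subset[1,n]$ symmetrically on both tails, $m\leq j$ and $\hat{n}_1-m\leq n-j$, and then transfer the rank $j$ of $y$ to $i$ through the full-stream query guarantee $\abs{j-i}\leq\epsilon\min(i,n-i)$, so that $\min(m,\hat{n}_1-m)\leq\min(j,n-j)\leq(1+\epsilon)\min(i,n-i)$ (no case split is even needed: bound each argument of the minimum separately). This avoids the Fr\'echet bound and the $S_{(1)}$-accuracy step entirely and is both simpler and tighter; the only caveat, shared implicitly by the paper's own argument (e.g.\ its use of $n\hat{F}_{n,(2)}(G)\leq i+\floor{\epsilon\min(i,n-i)}$), is the no-ties convention identifying the rank of the returned order statistic $\tilde{x}_{(2)}^{j}$ with its index.
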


\begin{proof}
From (\ref{equation:boundforinverse}), we can bound
\begin{equation}
\abs*{\tilde{\tilde{F}}_{\hat{n}_1,(2)}(\tilde{\tilde{F}}^{-1}_{n,(2)}(i/n))-\hat{F}_{\hat{n}_1,(2)}(\tilde{\tilde{F}}^{-1}_{n,(2)}(i/n))} \leq 3 \epsilon \min\left(\hat{F}_{\hat{n}_1,(2)}(\tilde{\tilde{F}}_{n,(2)}^{-1}(i/n)), 1-\hat{F}_{\hat{n}_1,(2)}(\tilde{\tilde{F}}_{n,(2)}^{-1}(i/n))\right).
\label{equation:theorem2bound}
\end{equation}
Let
$$
D = \hat{F}_{\hat{n}_1,(2)}(\tilde{\tilde{F}}_{n,(2)}^{-1}(i/n)).
$$
First we will concentrate on bounding this term by above and below, since (\ref{equation:theorem2bound}) is proportional to the minimum of $D$ and $1-D$. As $D$ depends on the query $\tilde{\tilde{F}}_{n,(2)}^{-1}(i/n)$, this query also needs to be bounded above and below.
\subsubsection*{Bounding $D$ by above}
Note that the query $\tilde{\tilde{F}}_{n,(2)}^{-1}(i/n)$ is an increasing function with $i$, and clearly $\hat{F}_{\hat{n}_1,(2)}(\tilde{\tilde{F}}_{n,(2)}^{-1}(i/n))$ is increasing with $\tilde{\tilde{F}}_{n,(2)}^{-1}(i/n)$ and $i$ too. Therefore it suffices to bound $D$ with $\hat{F}_{\hat{n}_1,(2)}$ evaluated at the upper bound for $\tilde{\tilde{F}}_{n,(2)}^{-1}(i/n)$. Note that as $M(S_{(2)}^1,\ldots,S_{(2)}^{L})$ is a $\epsilon u$-approximate summary as shown in Sec. \ref{sec:mergingproof}, we have that,
$$
\tilde{\tilde{F}}_{n,(2)}^{-1}(i/n) = \tilde{x}_{(2)}^{j}, \qquad j \in [i-\epsilon \min(i,n-i), i + \epsilon \min(i,n-i)],
$$
and therefore,
$$
\tilde{\tilde{F}}_{n,(2)}^{-1}(i/n) \leq \tilde{x}_{(2)}^{i + \floor{\epsilon \min(i,n-i)}}=G.
$$
Recall that $\hat{n}_1\hat{F}_{\hat{n}_1,(2)}(G)$ is just the count of all elements in $\big\{x_{(2)}^{\hat{I}(i)}\big\}_{i=1}^{\hat{n}_1}$ less than or equal to $\tilde{x}_{(2)}^{i+\floor{\epsilon \min(i, n-i)}}$.
%$(i+\floor{\epsilon \min(i, n-i)})$'th order statistic of $\big\{x_{(2)}^{i}\big\}_{i=1}^{n}$.
Since $\big\{x_{(2)}^{\hat{I}(i)}\big\}_{i=1}^{\hat{n}_1} \subset \big\{x_{(2)}^{i}\big\}_{i=1}^{n}$, it must follow that $\hat{n}_1\hat{F}_{\hat{n}_1,(2)}(G) \leq n\hat{F}_{n,(2)}(G)\leq i + \floor{\epsilon \min(i, n-i)}$. Therefore,
%Recall that $\big\{x_{(2)}^{\hat{I}(i)}\big\}_{i=1}^{\hat{n}_1} \subset \big\{x_{(2)}^{i}\big\}_{i=1}^{n}$ and therefore $\hat{n}_1\hat{F}_{\hat{n}_1,(2)}(G) \leq i + \floor{\epsilon \min(i, n-i)}$ since this is the count of all elements in $\big\{x_{(2)}^{\hat{I}(i)}\big\}_{i=1}^{\hat{n}_1}$ less than or equal to the $(i+\floor{\epsilon \min(i, n-i)})$'th order statistic of $\big\{x_{(2)}^{i}\big\}_{i=1}^{n}$. Therefore,
$$
D \leq \frac{i + \min(i,n-i)\epsilon}{\hat{n}_1}.
$$
\subsubsection*{Bounding $D$ by below}
For a lower bound on $D$, we shall use the following bound\footnote{This is given by the fact that in the case where there is a perfect negative rank correlation between the two components in the data stream $\big\{x_{(1)}^i,x_{(2)}^i\big\}_{i=1}^{n}$, there will always be $\max(2i-n,0)$ values of $j$ that satisfy $x_{(1)}^j \leq \tilde{x}_{(1)}^{i}$ and $x_{(2)}^j \leq \tilde{x}_{(2)}^i$},
%in $\big\{x_{(2)}^{I(i)}\big\}_{i=1}^{n_1}$ that have rank less than or equal to $i/n$ in the set $\big\{x_{(2)}^i\big\}_{i=1}^{n}$.}},
$$
n_1\hat{F}_{n_1,(2)}\left(\hat{F}_{n,(2)}(i/n)\right) \geq \max(2i-n, 0).
$$
 Now, recall from Sec. \ref{sec:querying} that $\hat{I} \subset I$ if $\hat{n}_1 < n_1$ or $I \subset \hat{I}$ if $\hat{n}_1 > n_1$; given that $S_{(1)}$ is a $\epsilon u$-approximate summary we have that,
$$
\hat{n}_1\hat{F}_{\hat{n}_1,(2)}\left(\hat{F}_{n,(2)}(i/n)\right) \geq \max\left(2i-n-\min(i,n-i)\epsilon, 0\right).
$$
In addition to this, note that as $M(S_{(2)}^1,\ldots,S_{(2)}^{L})$ is a $\epsilon u$-approximate summary, we have that $\tilde{\tilde{F}}_{n,(2)}^{-1}(i/n) \geq \tilde{x}_{(2)}^{i - \ceil{\epsilon \min(i,n-i)}}$ and,
$$
\hat{n}_1\hat{F}_{\hat{n}_1,(2)}\left(\tilde{\tilde{F}}_{n,(2)}(i/n)\right) \geq \max\left(2\left(i-\min(i,n-i)\epsilon\right)-n-\min(i,n-i)\epsilon, 0\right).
$$
Then finally,
$$
D \geq \max\left(\frac{2i-n-3\min(i,n-i)\epsilon}{\hat{n}_1}, 0\right),
$$
and therefore,
\begin{equation*}
\begin{split}
1 - D &\leq \min\left(\frac{2\left(n-i\right)+3\min(i,n-i)\epsilon}{\hat{n}_1}, 1 \right)\\
\quad &\leq \left(\frac{2\left(n-i\right)+3\min(i,n-i)\epsilon}{\hat{n}_1}\right).
\end{split}
\end{equation*}
\bigskip
Now that $D$ and $1-D$ have been bounded from above, these bounds can be used to bound the overall error in (\ref{equation:theorem2bound}).
\subsubsection*{Overall bound}
The term (A) in (\ref{equation:contributing}) can be bounded by,
\begin{equation*}
\begin{split}
\abs*{\frac{\hat{n}_1}{n}\tilde{\tilde{F}}_{\hat{n}_1,(2)}(\tilde{\tilde{F}}^{-1}_{n,(2)}(i/n))-\frac{\hat{n}_1}{n}\hat{F}_{\hat{n}_1,(2)}(\tilde{\tilde{F}}^{-1}_{n,(2)}(i/n))} &\leq 3\epsilon \min\left(\left(\frac{2(n-i)+3\min(i,n-i)\epsilon}{n}\right),\left(\frac{i+\min(i,n-i)\epsilon}{n}\right)\right)\\
\quad &\leq 3\epsilon\min\left(\left(\frac{2(n-i)+3\min(i,n-i)\epsilon}{n}\right), \left(\frac{2i+3\min(i,n-i)\epsilon}{n}\right)\right)\\
\quad &=\frac{\epsilon\min(i,n-i)(6+9\epsilon)}{n}.
\end{split}
\end{equation*}

\end{proof}
The second part of the error in (\ref{equation:contributing}) corresponds to the error associated with evaluating the empirical CDF $\hat{F}_{\hat{n}_1,(2)}$ with the query of the $\epsilon u$-approximate quantile summary $M(S_{(2)}^1,\ldots,S_{(2)}^L)$, instead of the empirical quantile function $\hat{F}^{-1}_{n,(2)}$.
\begin{theorem}
The term (B) in (\ref{equation:contributing}) can be bounded by,
$$
\abs*{\frac{\hat{n}_1}{n}\hat{F}_{\hat{n}_1,(2)}(\tilde{\tilde{F}}^{-1}_{n,(2)}(i/n))-\frac{\hat{n}_1}{n}\hat{F}_{\hat{n}_1,(2)}(\hat{F}^{-1}_{n,(2)}(i/n))} \leq \frac{\epsilon \min(i,n-i)}{n}
$$
\end{theorem}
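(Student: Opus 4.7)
The plan is to reduce term (B) to a difference-of-counts estimate on the same empirical CDF, exploiting the fact that only the evaluation point changes between the two terms. Since $\hat{F}_{\hat{n}_1,(2)}$ is monotone nondecreasing, the quantity inside the absolute value is simply the proportion of points of $\{x_{(2)}^{\hat{I}(i)}\}_{i=1}^{\hat{n}_1}$ lying strictly between $\hat{F}^{-1}_{n,(2)}(i/n)$ and $\tilde{\tilde{F}}^{-1}_{n,(2)}(i/n)$ (in whichever order). So the whole argument boils down to: (i) control the index-gap between these two order statistics using the $\epsilon u$-approximate property of the merged summary, and (ii) observe that any count over the subsample $\hat{I}$ is dominated by the count over the full sample.

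First I would invoke Section~\ref{sec:mergingproof} to conclude that $M(S_{(2)}^1,\ldots,S_{(2)}^{L})$ is itself $\epsilon u$-approximate, so that
\[
\tilde{\tilde{F}}^{-1}_{n,(2)}(i/n) = \tilde{x}_{(2)}^{j}, \qquad j \in \bigl[i-\epsilon\min(i,n-i),\, i+\epsilon\min(i,n-i)\bigr],
\]
while of course $\hat{F}^{-1}_{n,(2)}(i/n)=\tilde{x}_{(2)}^{i}$. Writing out both empirical CDFs as indicator sums, the expression inside the absolute value of (B) equals
\[
\frac{1}{n}\left|\,\sum_{k=1}^{\hat{n}_1} \bigl( 1_{x^{\hat{I}(k)}_{(2)} \leq \tilde{x}_{(2)}^{j}} - 1_{x^{\hat{I}(k)}_{(2)} \leq \tilde{x}_{(2)}^{i}} \bigr)\,\right|,
\]
and by monotonicity this is exactly $1/n$ times the number of indices $k$ for which $x^{\hat{I}(k)}_{(2)}$ lies in the half-open interval between the two order statistics.

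Next I would bound that count. Since $\{x^{\hat{I}(k)}_{(2)}\}_{k=1}^{\hat{n}_1}$ is a subset of $\{x_{(2)}^i\}_{i=1}^{n}$, its count in any interval is dominated by the count of the full stream in the same interval. But the number of elements of the full stream lying between $\tilde{x}_{(2)}^{\min(i,j)}$ and $\tilde{x}_{(2)}^{\max(i,j)}$ is exactly $|i-j|$, which is at most $\epsilon\min(i,n-i)$ by the $\epsilon u$-approximate bound on $j$. Dividing by $n$ yields the claimed bound $\epsilon\min(i,n-i)/n$.

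There is essentially no hard step here; the only subtlety worth stating carefully is the subset-domination in step two, which is what lets us pay only $|i-j|$ (a bound on the full-sample count) for what is really a subsample count, and the observation that this bound already carries the correct $\min(i,n-i)$ factor inherited from the biased quantile guarantee. Because the two evaluation points are applied to the \emph{same} CDF, there is no additional error from approximating $\hat{F}_{\hat{n}_1,(2)}$ here (that contribution was absorbed into term (A)), so no further triangle-inequality splitting is needed.
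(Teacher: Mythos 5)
Your argument is correct and is essentially the argument the paper intends: the paper's own proof of this bound simply defers to Theorem~3 of \cite{Gregory} with the uniform error $\pm\epsilon n$ replaced by the relative error $\pm\epsilon\min(i,n-i)$ from the $\epsilon u$-approximate merged summary, which is exactly the counting argument you spell out (difference of the two CDF evaluations equals $1/n$ times the number of subsample points between the two order statistics, dominated by the full-sample count $\abs{i-j}\leq\epsilon\min(i,n-i)$). Your implicit assumption that the full-sample count between $\tilde{x}_{(2)}^{\min(i,j)}$ and $\tilde{x}_{(2)}^{\max(i,j)}$ is $\abs{i-j}$ (i.e.\ no ties) is the same convention the paper itself uses, e.g.\ in its bound $n\hat{F}_{n,(2)}(G)\leq i+\floor{\epsilon\min(i,n-i)}$ in the proof of term (A), so no gap remains.
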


\begin{proof}
This can be proved in the same way as Theorem 3 was in \cite{Gregory}, only with a $\pm \epsilon \min(i,n-i)$ error from the $\epsilon u$-approximate summaries rather than the $\pm \epsilon n$ error from the $\epsilon$-approximate summaries.
\end{proof}
Finally, the third part of the error in (\ref{equation:contributing}) corresponds to the error associated with using the cardinality of the set $\hat{I}$ in Sec. \ref{sec:querying}, $\hat{n}_1$, instead of the cardinality of the set $I$ in Sec. \ref{sec:empiricalcopulas}, $n_1$.
\begin{theorem}
The term (C) in (\ref{equation:contributing}) can be bounded by,
$$
\abs*{\frac{\hat{n}_1}{n}\hat{F}_{\hat{n}_1,(2)}(\hat{F}^{-1}_{n,(2)}(i/n))-\frac{n_1}{n}\hat{F}_{n_1,(2)}(\hat{F}^{-1}_{n,(2)}(i/n))} \leq \frac{\epsilon \min(i,n-i)}{n}
$$
\end{theorem}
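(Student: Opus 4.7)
The plan is to bound the difference by the size of the symmetric difference $\hat{I}\triangle I$, then invoke the $\epsilon u$-approximate property of $S_{(1)}$ to control this size. This follows the template of the corresponding Theorem~4 in \cite{Gregory}, adapted to the sharper error profile of $\epsilon u$-approximate summaries.

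First, I would rewrite both terms in the difference as counts over their respective index sets. By the definitions from Sec.~\ref{sec:empiricalcopulas} and Sec.~\ref{sec:querying},
$$
\frac{\hat{n}_1}{n}\hat{F}_{\hat{n}_1,(2)}\!\left(\hat{F}^{-1}_{n,(2)}(i/n)\right)=\frac{1}{n}\sum_{k\in\hat{I}}1_{x_{(2)}^{k}\le\tilde{x}_{(2)}^{i}},
\qquad
\frac{n_1}{n}\hat{F}_{n_1,(2)}\!\left(\hat{F}^{-1}_{n,(2)}(i/n)\right)=\frac{1}{n}\sum_{k\in I}1_{x_{(2)}^{k}\le\tilde{x}_{(2)}^{i}}.
$$
Subtracting and applying the triangle inequality (splitting the sum into contributions from $\hat{I}\setminus I$ and $I\setminus\hat{I}$, each indicator bounded by $1$) gives the bound
$$
\abs*{\frac{\hat{n}_1}{n}\hat{F}_{\hat{n}_1,(2)}(\hat{F}^{-1}_{n,(2)}(i/n))-\frac{n_1}{n}\hat{F}_{n_1,(2)}(\hat{F}^{-1}_{n,(2)}(i/n))}\le\frac{|\hat{I}\triangle I|}{n}.
$$

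Next, I would exploit the nesting property recorded in Sec.~\ref{sec:querying}: either $\hat{I}\subset I$ (when $\hat{n}_1<n_1$) or $I\subset\hat{I}$ (when $\hat{n}_1>n_1$). Under nesting, $|\hat{I}\triangle I|=|\hat{n}_1-n_1|$, so the estimate reduces to $|\hat{n}_1-n_1|/n$. Now $n_1=i$ by construction, since $I$ consists of exactly the indices with $x_{(1)}^{k}\le\tilde{x}_{(1)}^{\lceil in/n\rceil}=\tilde{x}_{(1)}^{i}$. On the other hand, $\hat{n}_1=r_{\min,S_{(1)}}(v_{E})$ where $v_{E}=\tilde{\tilde{F}}^{-1}_{n,(1)}(i/n)$ is the output of an inverse query to the $\epsilon u$-approximate summary $S_{(1)}$. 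The defining property \eqref{equation:conditionforuapprox} of such a summary gives that the true rank of $v_{E}$ lies in $[i-\epsilon\min(i,n-i),\,i+\epsilon\min(i,n-i)]$, and this true rank equals $|\hat{I}|=\hat{n}_1$ by the way subsummaries are inserted and merged.

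Combining these pieces yields $|\hat{n}_1-n_1|\le\epsilon\min(i,n-i)$ and hence the claimed bound $\epsilon\min(i,n-i)/n$. The main obstacle is the second step: one must argue carefully that $\hat{I}$ and $I$ are genuinely nested, despite the combining operations of Sec.~\ref{sec:combining} which reorganise subsummaries. This nesting is what converts a bound on the size of a symmetric difference into a bound on $|\hat{n}_1-n_1|$, and is the crucial structural property linking the first-coordinate summary $S_{(1)}$ to the index set $\hat{I}$ used in the copula approximation. Once nesting is in hand, the rest is a direct application of the $\epsilon u$-approximate guarantee.
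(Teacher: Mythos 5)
Your proposal is correct and follows essentially the same route as the paper, whose proof of this term is simply the observation that it goes through exactly as Theorem 4 of \cite{Gregory} with the $\pm\epsilon n$ rank error of $S_{(1)}$ replaced by $\pm\epsilon\min(i,n-i)$ --- i.e.\ precisely the symmetric-difference/nesting argument you spell out, with $|\hat{n}_1-n_1|\leq\epsilon\min(i,n-i)$ coming from the $\epsilon u$-approximate guarantee. The only slight imprecision is identifying $\hat{n}_1$ with the true rank of $v_E$: in fact $\hat{n}_1=r_{min,S_{(1)}}(v_E)$, which can undercount elements absorbed into later subsummaries, but the query analysis in Appendix \ref{sec:proofofnewinvariant} shows $r_{min,S_{(1)}}(v_E)$ itself lies within $\epsilon\min(i,n-i)$ of $i$, so your conclusion stands.
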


\begin{proof}
This can be proved in the same way as Theorem 4 was in \cite{Gregory}, only with a $\pm \epsilon \min(i,n-i)$ error from the $\epsilon u$-approximate summaries rather than the $\pm \epsilon n$ error from the $\epsilon$-approximate summaries.
\end{proof}

\subsection{Space-memory}

\label{sec:spacememory}

A single $\epsilon u$-approximate quantile summary is of the worst case length $L=\mathcal{O}\big(\frac{\log ^2(\epsilon n)}{\epsilon}\big)$, since there is no benefit in using $u<1/(\epsilon n)$ \citep{Cormode}. Therefore the worst case length of the modified copula summary is $L^2=\mathcal{O}\big(\frac{\log^4(\epsilon n)}{\epsilon^2}\big)$. This bound is not saturated in all cases; for instance it was shown in \cite{Greenwald} that for some random streams the space-efficient summaries used in this paper can have size independent of $n$. The saturation of the worst case length of the $\epsilon u$-approximate summaries that make up the modified copula summary, given above, is discussed in \cite{Zhang}. For streams that are increasing in value for example, the space memory used by the $\epsilon u$-approximate quantile summaries is increasing with the length of the data stream. For the scope of this paper, this issue is not investigated further. However there are alternative approaches to relative error approximations of empirical quantile functions proposed in the literature; see \cite{Zhang} for more details on these.

It may be of interest to the reader to adapt this methodology to studying the tail dependence of higher dimensional data streams. This would be achieved by approximating higher dimensional empirical copulas, and then substituting these approximations into the tail dependence coefficient expressions in (\ref{equation:dependencefunction}) and (\ref{equation:upperdependencefunction}). Unfortunately a natural extension of the copula summary structure presented here to higher dimensions would incur an exponentially growing (with dimension) space-memory requirement \citep{Hershberger}. Turn to the work in \cite{Gregory} for an example of how one may utilise bivariate copula summaries to estimate higher dimensional empirical copulas via decompositions involving a sequence of bivariate empirical copulas \citep{Aas}, and alleviate this issue.

%This can be seen by noting that on average, the number of elements to be inserted into the subsummaries $S_{(2)}^{i}$, for $i=1,\ldots,L$, over time will be $n/L$ (although every subsummary will differ from this). Then note that the worst case space-memory of the $\epsilon u$-approximate summary is
%$$
%\mathcal{O}\Bigg(L\Big(1  - \frac{\log(L)\log(\epsilon n)}{\epsilon}\Big)\Bigg)=\mathcal{O}\Bigg(\frac{\log(\epsilon n)^2\log(u^{-1})^2}{\epsilon^2}-\frac{\log(\epsilon n)^2\log(u^{-1})\log(\log(u^{-1})\log(\epsilon n)/\epsilon)}{\epsilon^2}\Bigg).
%$$
%The first term dominates for $n \to \infty$ since $n > \log(n)^2$ in this limit.

\section{Simulations and case-study}

\label{sec:numericalanalysis}

This section will provide some numerical demonstrations of the tail dependence coefficient estimation scheme in the streaming data regime, proposed in this paper. The theoretical analysis given throughout the paper is also numerically supported here. In Sec. \ref{sec:numericalapproximation}, the estimation of the tail dependence coefficients for random variables observed through data streams will be considered. Then in Sec. \ref{sec:numericalsizeandruntime}, the properties of the modified space-efficient copula summary (utilised for the estimation of the tail dependence coefficients) will be numerically investigated. Finally, a case-study of the Los Alamos National Laboratory (LANL) netflow data-set is considered.

\subsection{Approximation to the tail dependence coefficients}
\label{sec:numericalapproximation}

First, a demonstration of estimating both of the tail dependence coefficients between two random variables observed in a bivariate stream of data, will be presented. Consider the data stream $\big\{x_{(1)}^i,x_{(2)}^i\big\}_{i=1}^{n}$, where $n=3 \times 10^5$. Both components are randomly sampled from $\mathcal{N}(0,1)$ in one case, or $\text{Beta}(10, 1)$ in another. In both cases the components are correlated with Pearson's correlation $\rho=0.8$. The accuracy parameter $\epsilon$ takes the value of $0.1$, and the copula summary presented in this paper is constructed for five independent bivariate data streams for each distribution. We set $i=25$; this value is used to compute the estimate to both of the tail dependence coefficients using the copula summaries. These estimates are computed after every 5000'th element has been added to the data streams. The absolute error, away from the empirical lower and upper tail dependence coefficients of all data streams sampled from the Gaussian distribution, are shown over time in Figures \ref{fig:errorbias_lu_boxplot_2} and \ref{fig:errorbias_upper_lu_boxplot_2} respectively for the modified copula summary proposed in this paper. The error of the approximations to these tail dependence coefficients for all data streams sampled from the Beta distribution are also shown in Figures \ref{fig:errorbias_lu_boxplot_2_beta} and \ref{fig:errorbias_upper_lu_boxplot_2_beta}. Visible in all plots is the theoretical stream-length invariant bound presented in (\ref{equation:taildependencebound}) and (\ref{equation:uppertaildependencebound}). The error for the upper tail dependence coefficient is slightly more than that of the lower tail dependence coefficient in the case of the data streams sampled from the Gaussian distribution, and vice-versa for the data streams sampled from the Beta distribution.

%An example where the evaluation point of the lower tail dependence function in (\ref{equation:dependencefunction}), $i$, is varied is now presented. Approximations to the lower tail dependence function with varying values of $i$ using the modified copula summary proposed in this paper are computed and their absolute error is shown in Figure \ref{fig:mean_errorbias_for_different_i_2}. It is important to notice that this error stays relatively constant over $i$, and supports the form of the error bound presented in (\ref{equation:taildependencebound}).

\begin{figure}[!htb]
\centering
\minipage{0.47\textwidth}
\centering
  \includegraphics[width=\linewidth]{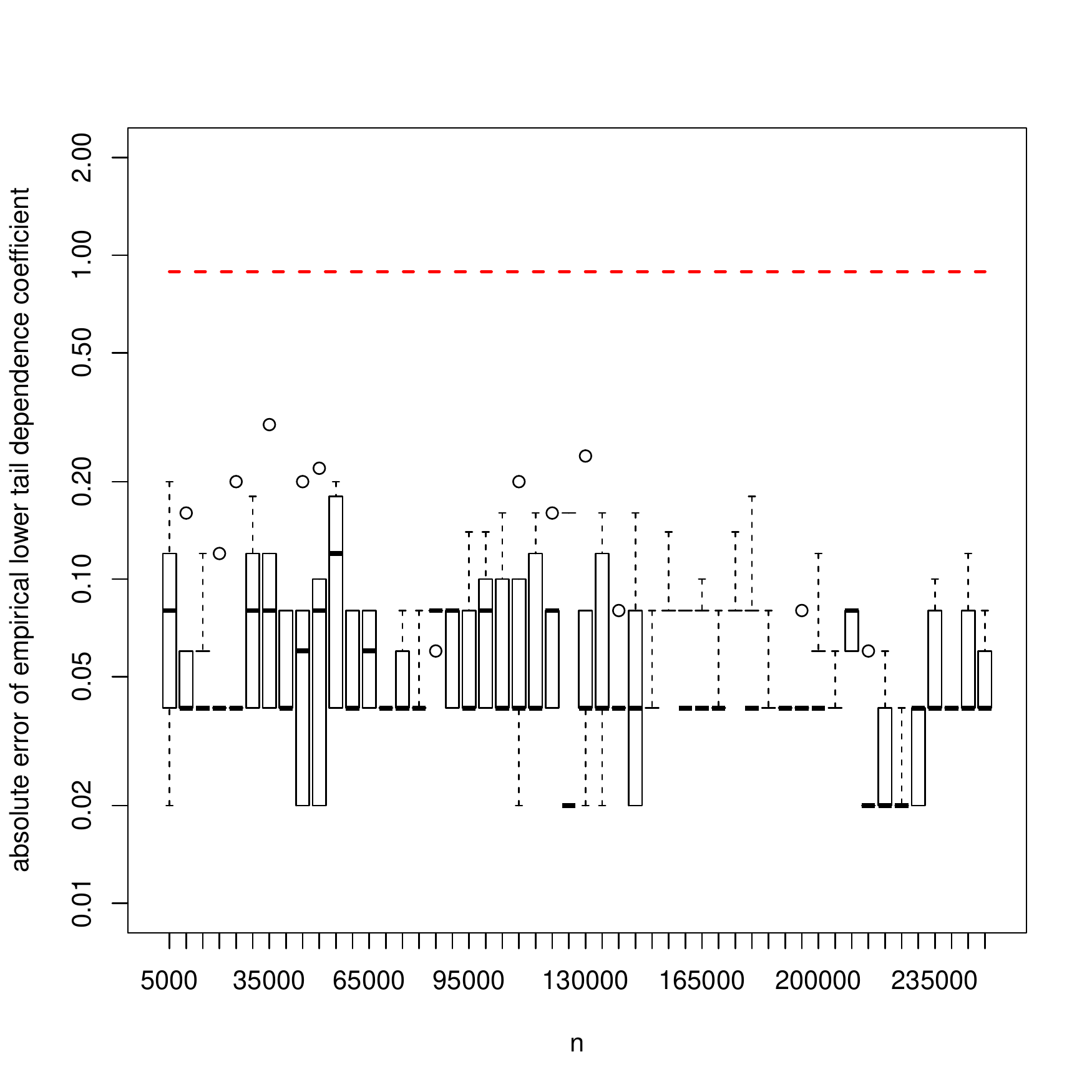}
  \caption{Boxplots of absolute error of the lower tail dependence coefficient approximations using the modified copula summary, with $\epsilon=0.1$, based on five independent data streams sampled from a Gaussian distribution. The red dashed line shows the theoretical bound in (\ref{equation:taildependencebound}).}\label{fig:errorbias_lu_boxplot_2}
\endminipage\hfill
\hspace{-5mm}
\minipage{0.47\textwidth}
\centering
  \includegraphics[width=\linewidth]{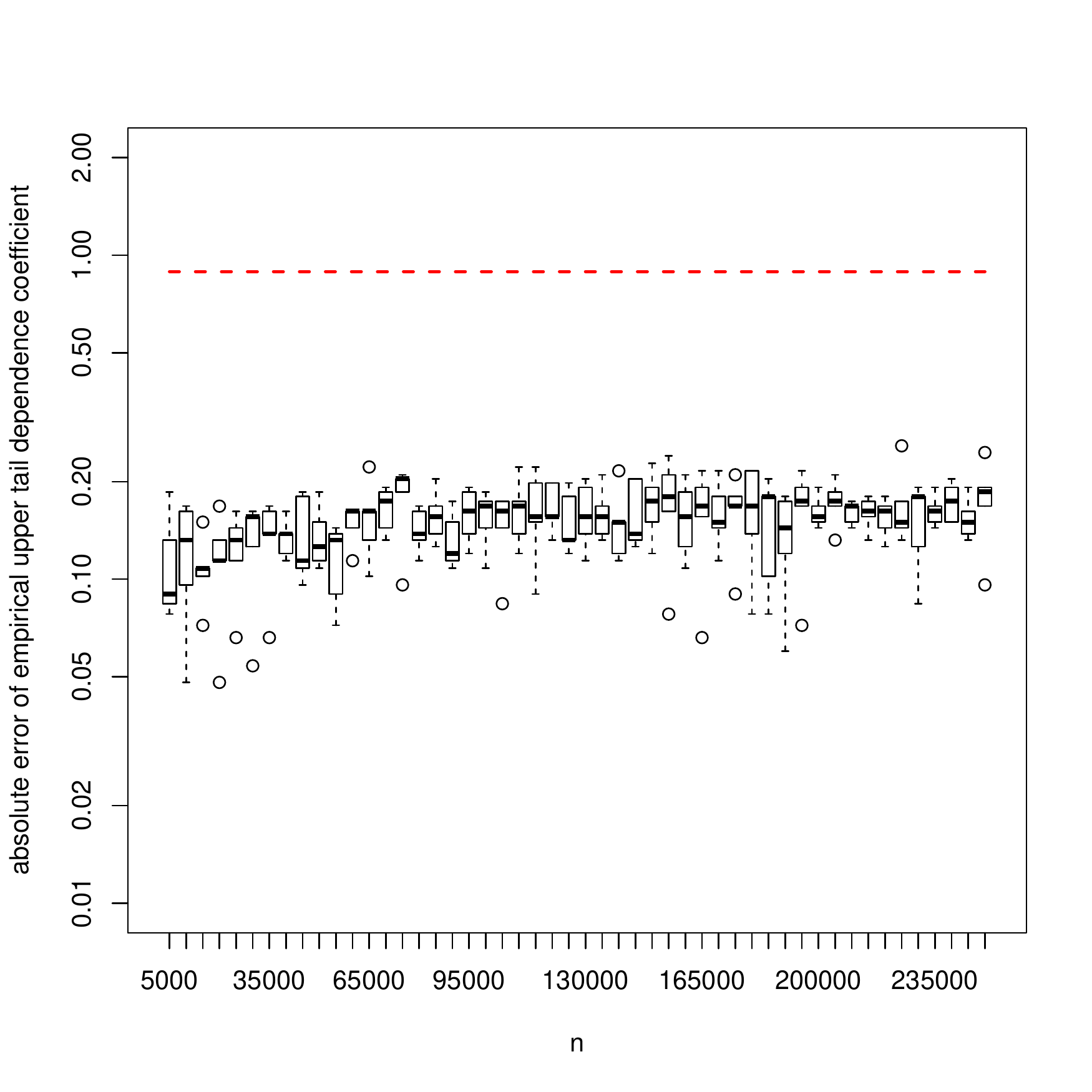}
  \caption{Boxplots of absolute error of the upper tail dependence coefficient approximations using the modified copula summary, with $\epsilon=0.1$, based on five independent data streams sampled from a Gaussian distribution. The red dashed line shows the theoretical bound in (\ref{equation:uppertaildependencebound}).}\label{fig:errorbias_upper_lu_boxplot_2}
\endminipage\hfill
\end{figure}

\begin{figure}[!htb]
\centering
\minipage{0.47\textwidth}
\centering
  \includegraphics[width=\linewidth]{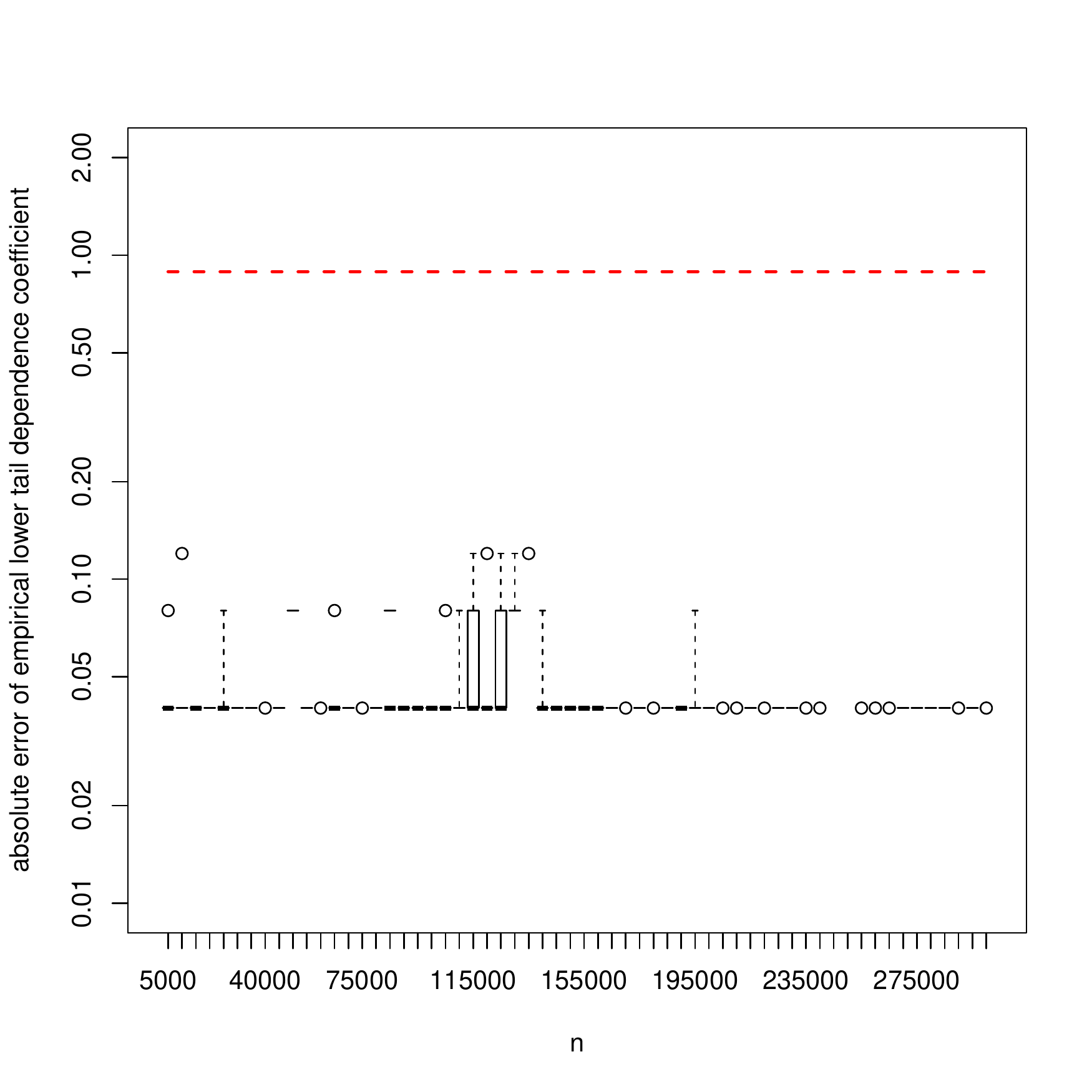}
  \caption{Boxplots of absolute error of the lower tail dependence coefficient approximations using the modified copula summary, with $\epsilon=0.1$, based on five independent data streams sampled from a Beta distribution. The red dashed line shows the theoretical bound in (\ref{equation:taildependencebound}).}\label{fig:errorbias_lu_boxplot_2_beta}
\endminipage\hfill
\hspace{-5mm}
\minipage{0.47\textwidth}
\centering
  \includegraphics[width=\linewidth]{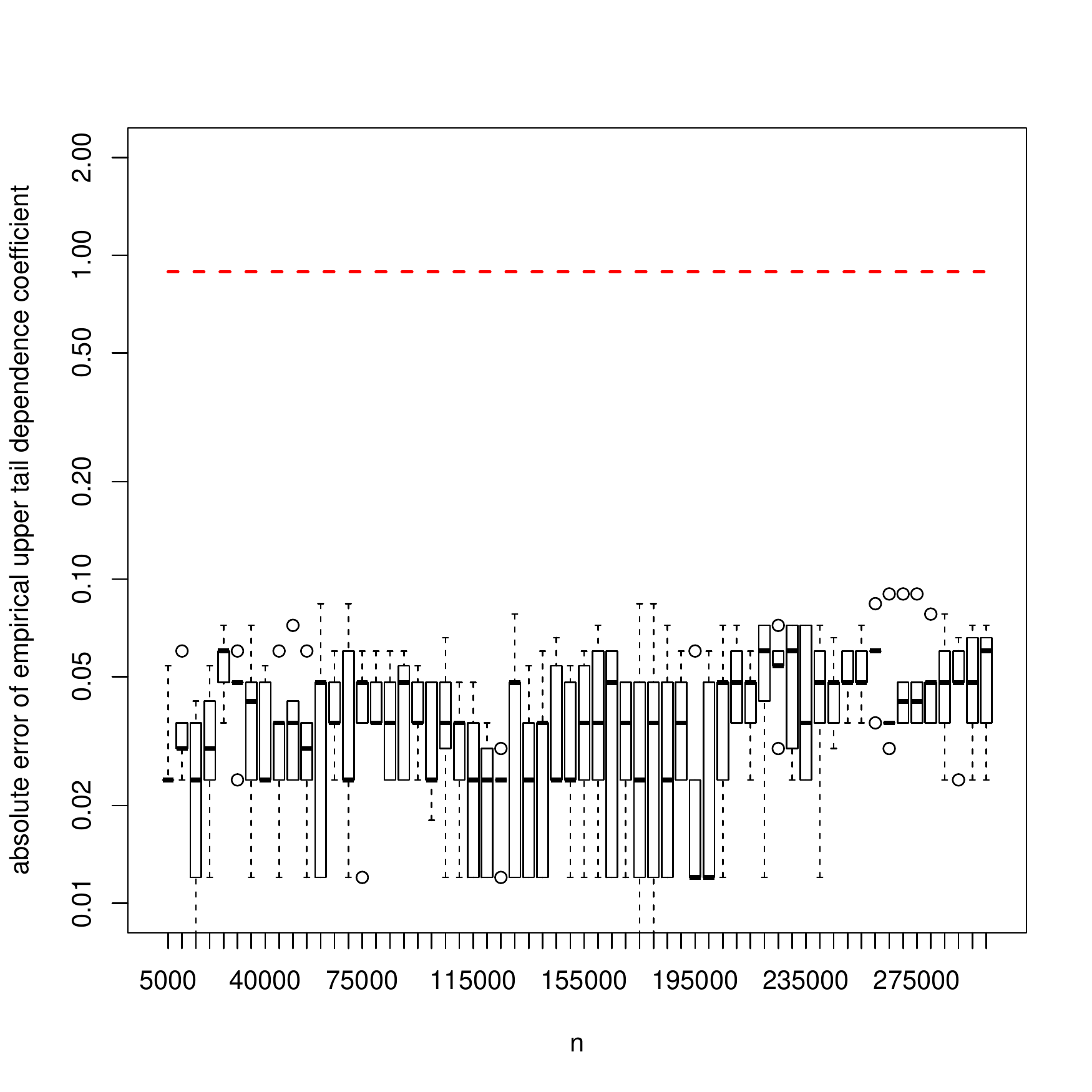}
  \caption{Boxplots of absolute error of the upper tail dependence coefficient approximations using the modified copula summary, with $\epsilon=0.1$, based on five independent data streams sampled from a Beta distribution. The red dashed line shows the theoretical bound in (\ref{equation:uppertaildependencebound}).}\label{fig:errorbias_upper_lu_boxplot_2_beta}
\endminipage\hfill
\end{figure}

\subsection{Properties of the copula summary}
\label{sec:numericalsizeandruntime}

Next the properties of the modified copula summary, including it's space-efficiency and implementation runtime, will be numerically demonstrated. Consider the modified copula summary presented in this paper, with $\epsilon=0.1$, used in the previous numerical experiment (where five independent bivariate data streams are sampled from both a Gaussian distribution and a Beta distribution). The absolute error of the approximation from this copula summary over time and all data streams sampled from the Gaussian distribution, for the evaluation points $(u_1,u_2)=(0.7,0.7)$ and $(u_1,u_2)=(0.02,0.02)$, are shown in Figures \ref{fig:error_lu_boxplot_2} and \ref{fig:error1_lu_boxplot_2} respectively. The same error, only for the data streams sampled from the Beta distribution, are shown in Figures \ref{fig:error_lu_boxplot_2_beta} and \ref{fig:error1_lu_boxplot_2_beta}. These show that, as with the standard copula summary presented in \cite{Gregory}, the error of the empirical copula approximation is bounded by a constant. However now, in the case where the evaluation point $(0.7, 0.7)$ is not in the tails of each marginal, the approximation has a higher bound (and therefore exhibits greater numerical error) than in the case where the evaluation point $(0.02, 0.02)$ is in the tails. This is in line with the theoretical analysis presented in (\ref{equation:modifiedcopulabound}). Next, the implementation runtime and space-efficiency of the modified copula summary, over all data streams sampled from the Gaussian distribution, is demonstrated. Figure \ref{fig:runtime_lu_boxplot_2} shows the runtime (in seconds) of an iteration of the copula summary algorithm after every 5000'th element is added to the streams. Occassional peaks are due to the combination operation in Sec. \ref{sec:combining} being implemented. Similarly, Figure \ref{fig:size_summary_ratio_lu_boxplot_2} shows the \textit{size ratio} of the modified copula summary to the entire streams:
$$
\text{size ratio} = \frac{\text{size of copula summary}}{\text{size of entire stream}} ,
$$
after every 5000'th element is added to the streams. This shows the increasing space-efficiency of the copula summary proposed in this paper, used for approximations to the empirical tail dependence coefficients, as the stream length increases.
%This coincides with the behaviour seen in \cite{Greenwald} for $\epsilon$-approximate quantile summaries, that in random streams the worst-case space-memory bounds discussed in Sec. \ref{sec:spacememory} do not get saturated.
At the end of the data stream the size of the entire stream is 5440520 bytes, 39 times the size of the copula summary.

%\begin{figure}[!htb]
%\centering
%\minipage{0.47\textwidth}
%\centering
%  \includegraphics[width=\linewidth]{mean_errorbias_for_different_i_2}
%  \caption{The absolute error of lower tail dependence coefficient approximations obtained from the modified copula summary proposed in this paper, with $\epsilon=0.1$, evaluated with varying $i$. \smallskip \smallskip}\label{fig:mean_errorbias_for_different_i_2}
%\endminipage\hfill
%\hspace{-5mm}
%\minipage{0.47\textwidth}
%\centering
%  \includegraphics[width=\linewidth]{error_2}
%  \caption{The absolute error of the empirical copula approximation $\hat{C}(0.7, 0.7)$, solid black line, and $\hat{C}(0.02, 0.02)$, dashed black line, using the modified copula summary proposed in this paper, with $\epsilon=0.2$. The bounds in (\ref{equation:modifiedcopulabound}), for each evaluation point respectively, are shown by the solid and dashed red lines.}\label{fig:error_2}
%\endminipage
%\end{figure}

\begin{figure}[!htb]
\centering
\minipage{0.47\textwidth}
\centering
\includegraphics[width=\linewidth]{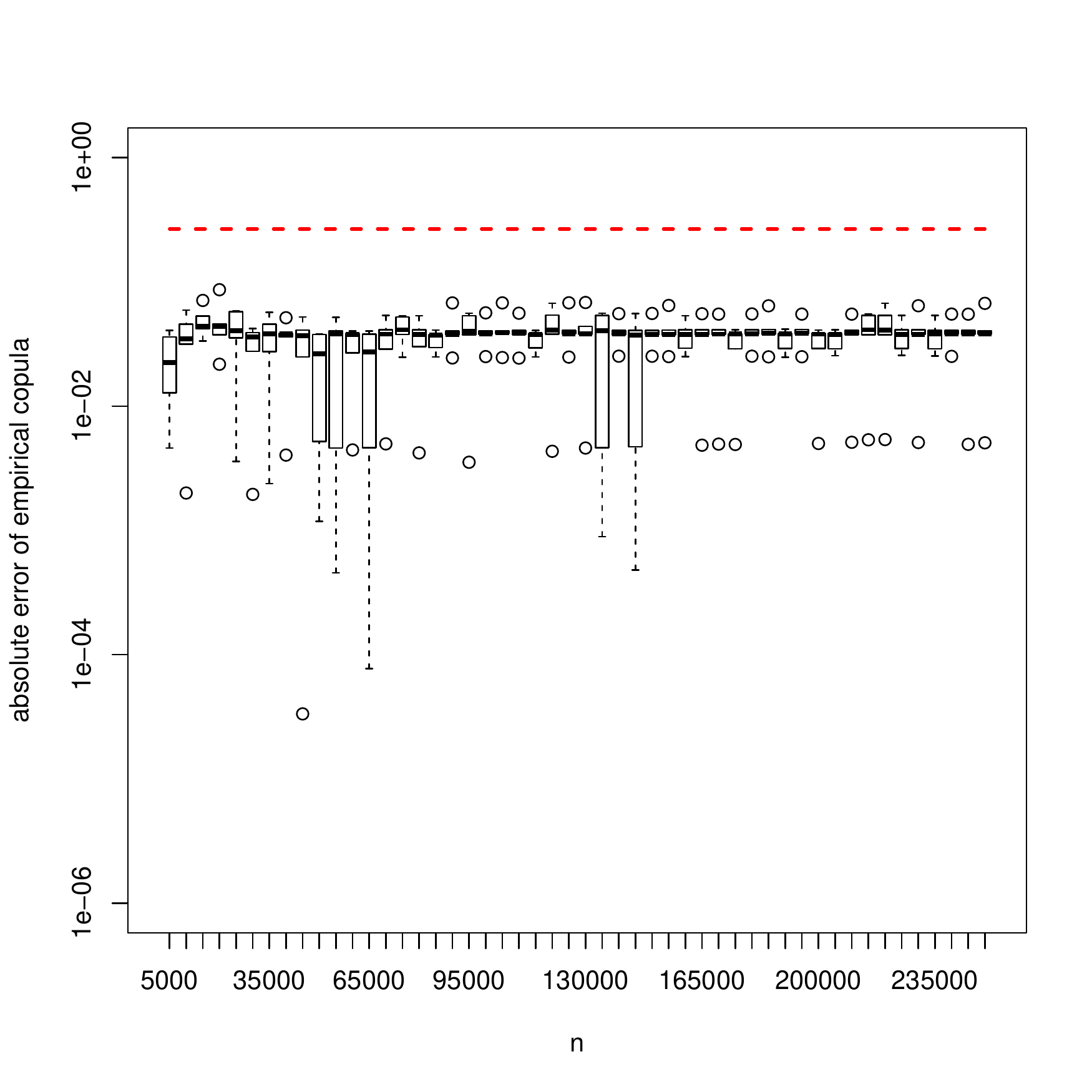}
\caption{Boxplots of absolute error, over time, of the empirical copula approximation $\hat{C}(0.7, 0.7)$ using the modified copula summary, with $\epsilon=0.1$, based on five independent data streams sampled from a Gaussian distribution. The bound in (\ref{equation:modifiedcopulabound}) is shown by the red line.}\label{fig:error_lu_boxplot_2}
\endminipage\hfill
\hspace{-5mm}
\minipage{0.47\textwidth}
\centering
\includegraphics[width=\linewidth]{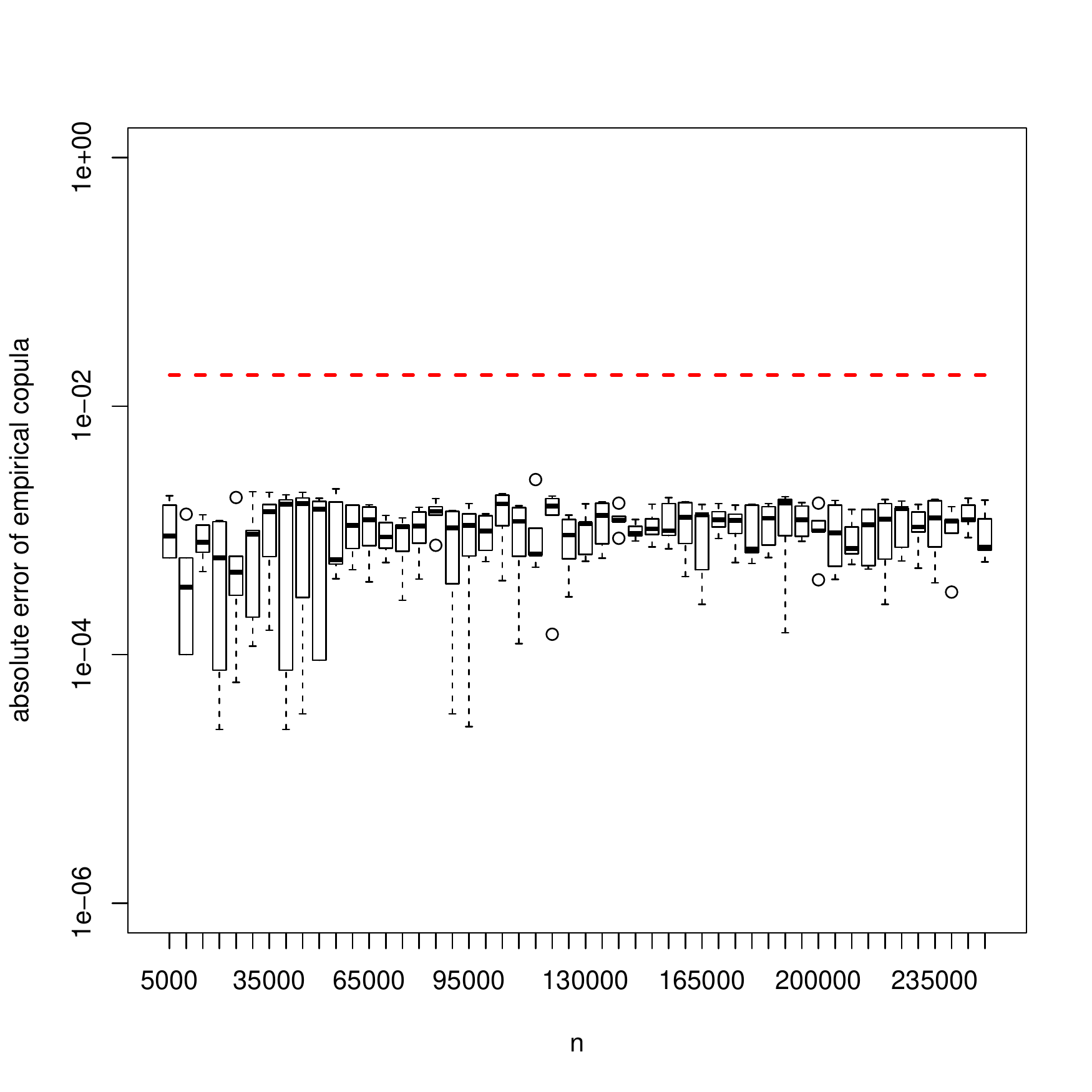}
\caption{Boxplots of absolute error, over time, of the empirical copula approximation $\hat{C}(0.02, 0.02)$ using the modified copula summary, with $\epsilon=0.1$, based on five independent data streams sampled from a Gaussian distribution. The bound in (\ref{equation:modifiedcopulabound}) is shown by the red line.}\label{fig:error1_lu_boxplot_2}
\endminipage\hfill
\end{figure}

\begin{figure}[!htb]
\centering
\minipage{0.47\textwidth}
\centering
\includegraphics[width=\linewidth]{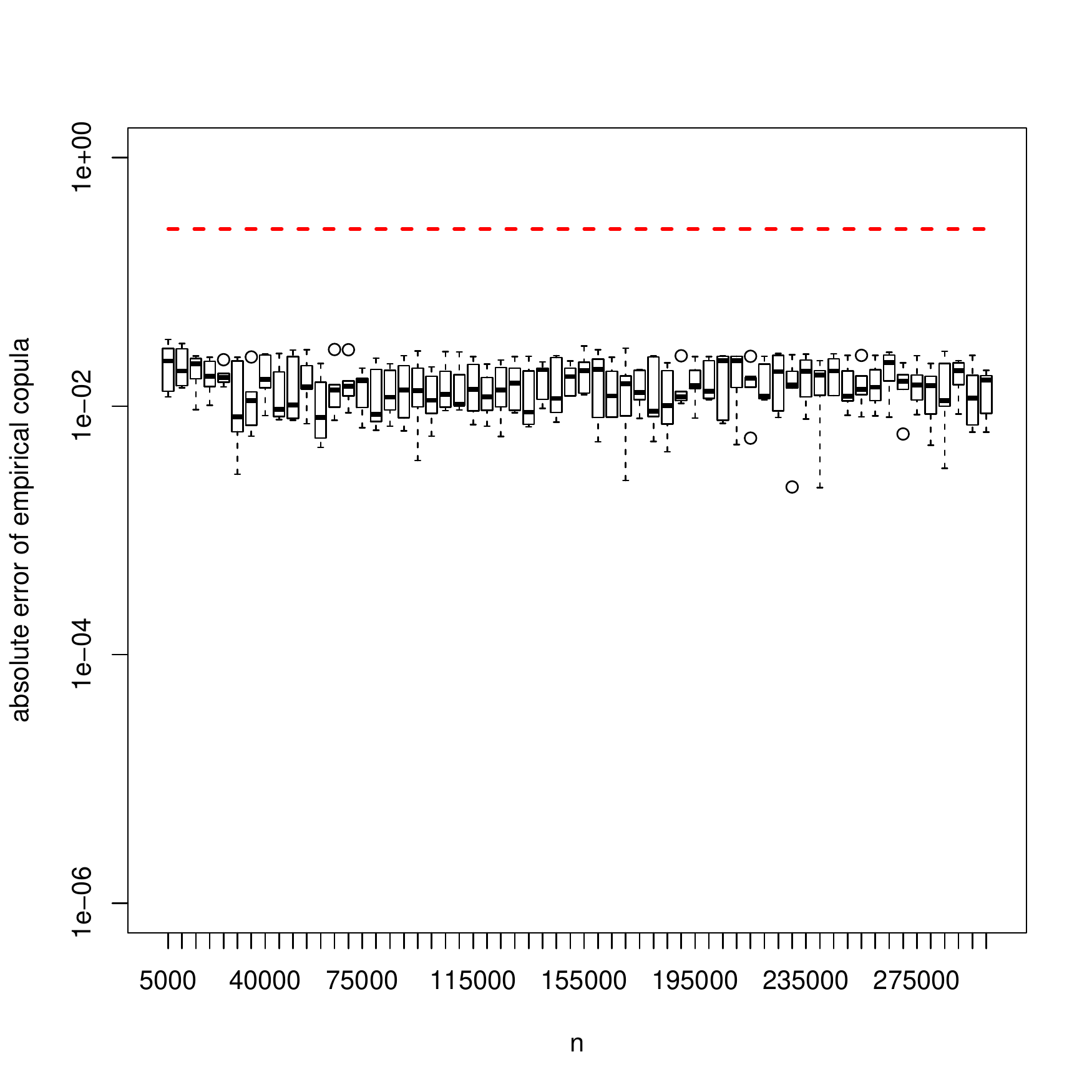}
\caption{Boxplots of absolute error, over time, of the empirical copula approximation $\hat{C}(0.7, 0.7)$ using the modified copula summary, with $\epsilon=0.1$, based on five independent data streams sampled from a Beta distribution. The bound in (\ref{equation:modifiedcopulabound}) is shown by the red line.}\label{fig:error_lu_boxplot_2_beta}
\endminipage\hfill
\hspace{-5mm}
\minipage{0.47\textwidth}
\centering
\includegraphics[width=\linewidth]{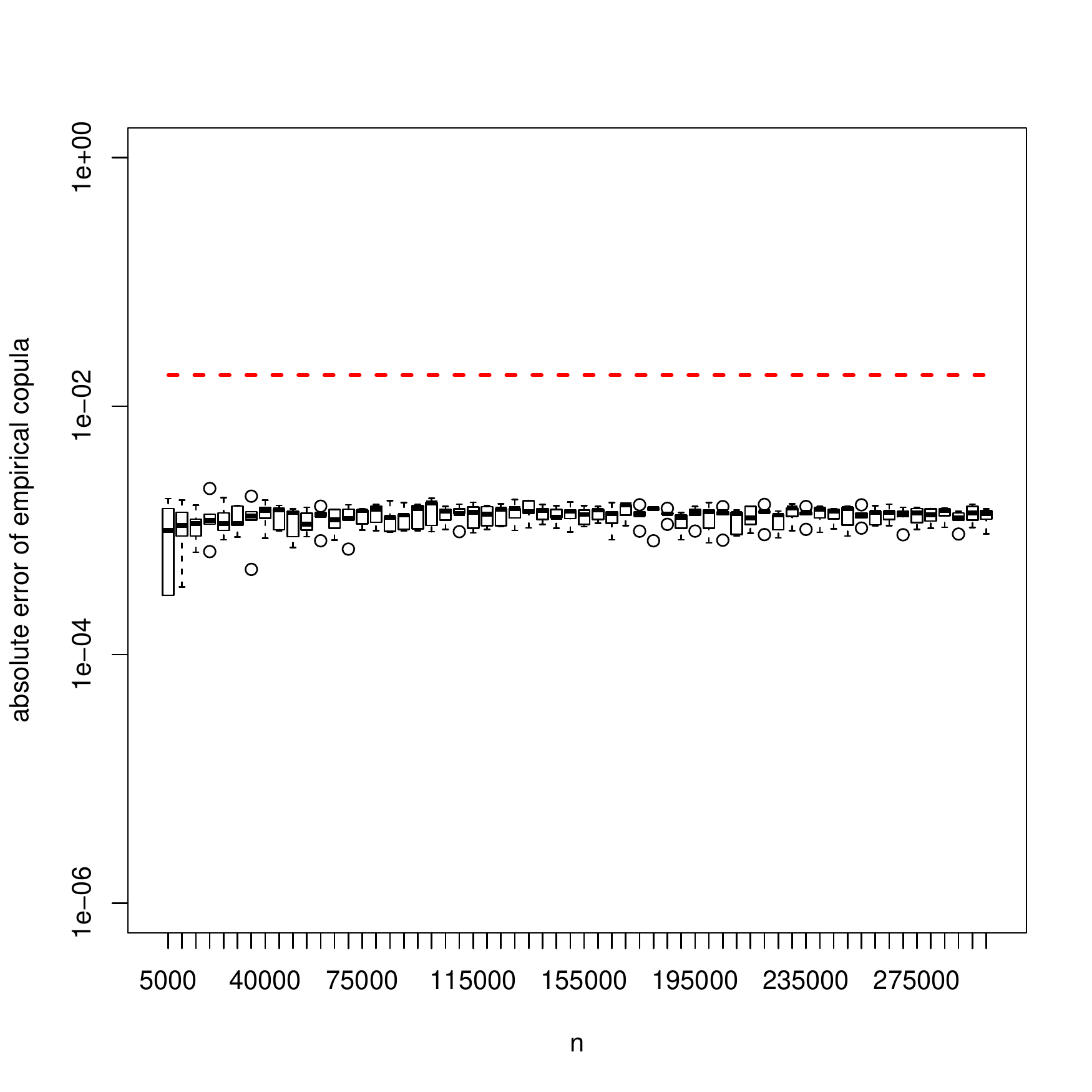}
\caption{Boxplots of absolute error, over time, of the empirical copula approximation $\hat{C}(0.02, 0.02)$ using the modified copula summary, with $\epsilon=0.1$, based on five independent data streams sampled from a Beta distribution. The bound in (\ref{equation:modifiedcopulabound}) is shown by the red line.}\label{fig:error1_lu_boxplot_2_beta}
\endminipage\hfill
\end{figure}

\begin{figure}[!htb]
\centering
\minipage{1\textwidth}
\centering
  \includegraphics[width=0.8\linewidth]{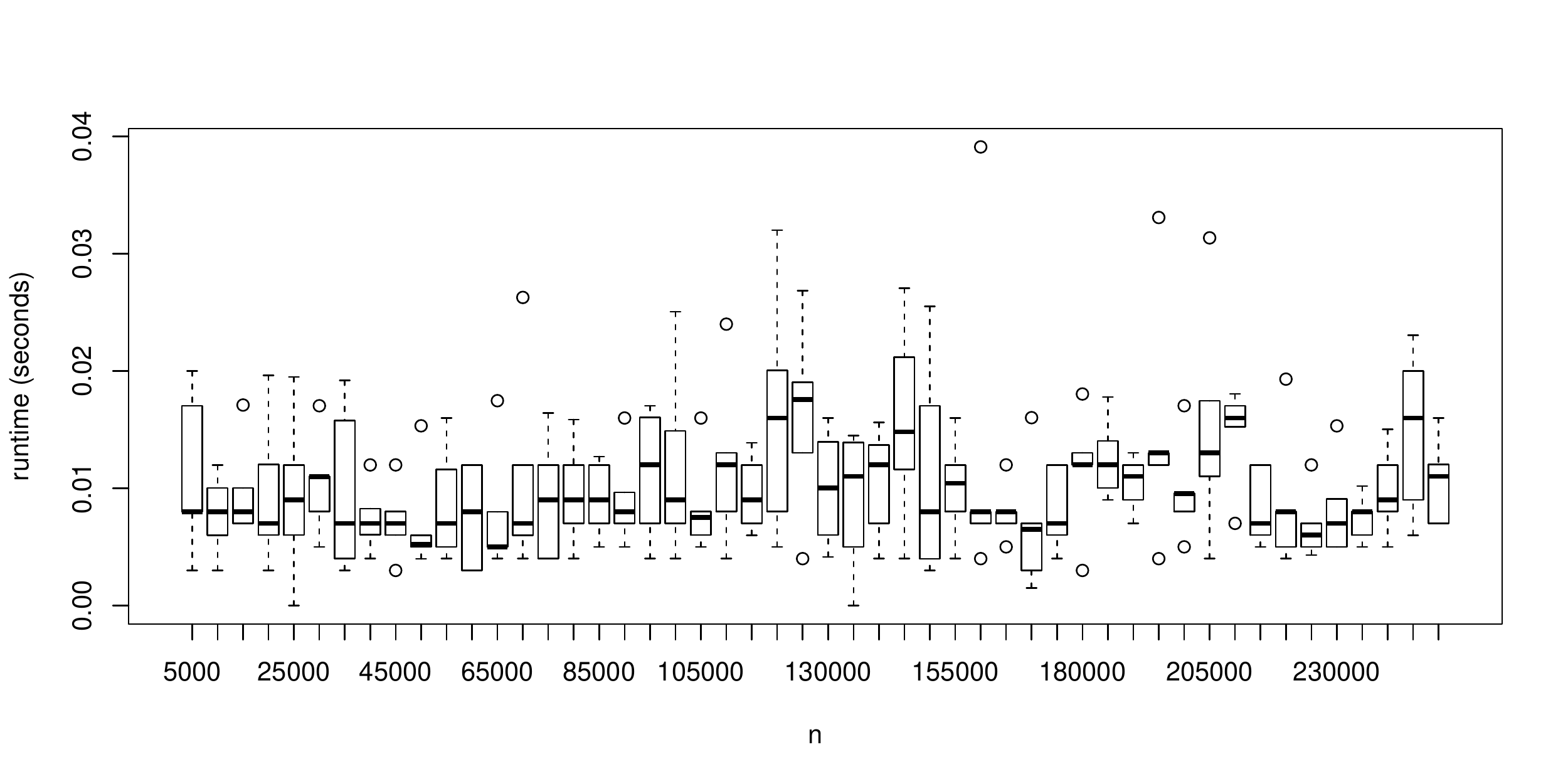}
  \caption{Boxplot of the implementation runtimes (in seconds) of the modified copula summary, with $\epsilon=0.1$, based on five independent data streams sampled from a Gaussian distribution.}\label{fig:runtime_lu_boxplot_2}
\endminipage\vfill
\minipage{1\textwidth}
\centering
  \includegraphics[width=0.8\linewidth]{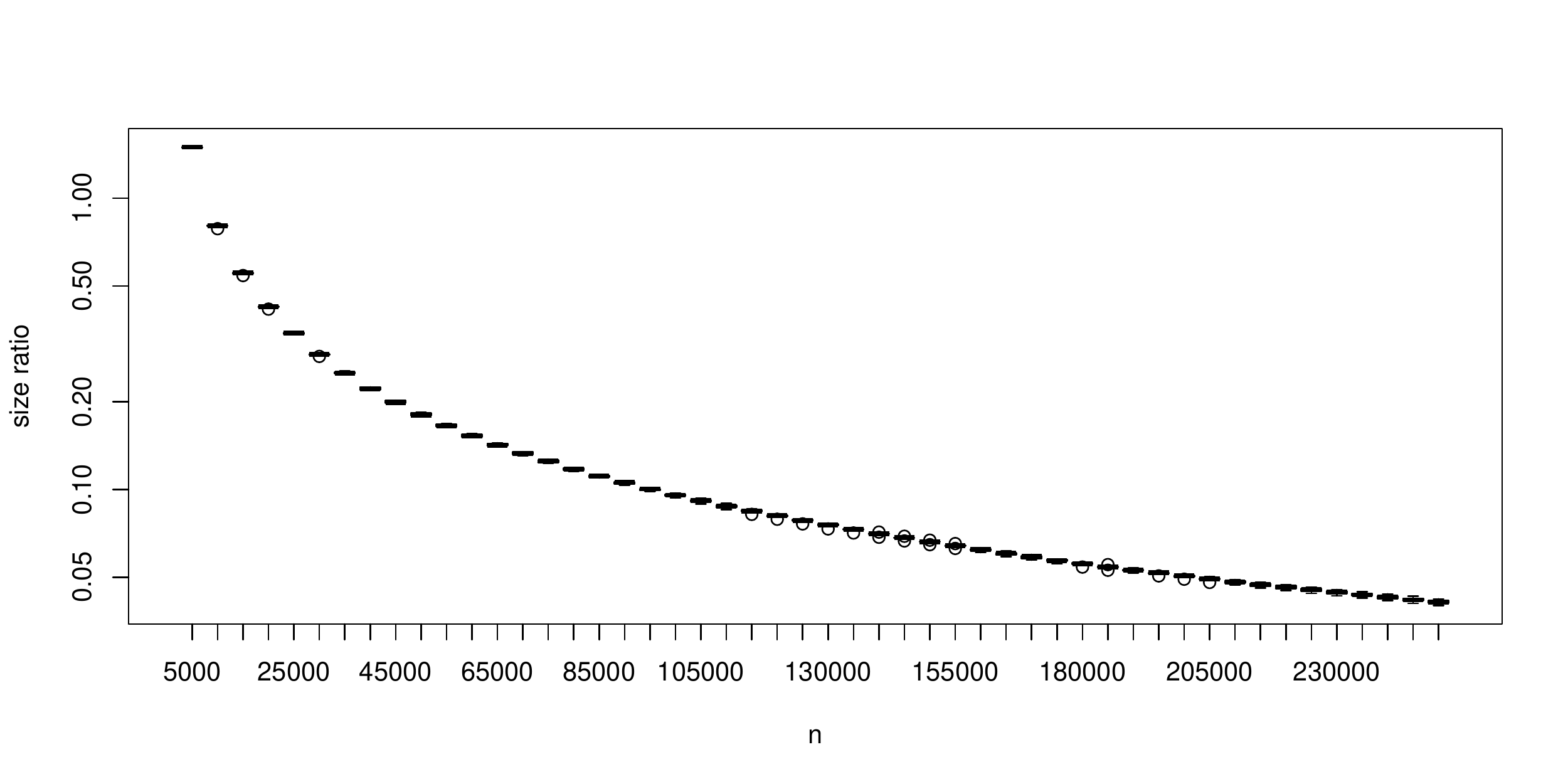}
  \caption{Boxplot of the size (in bytes) ratio of the modified copula summary to the entire stream, over time, with $\epsilon=0.1$, based on five independent data streams sampled from a Gaussian distribution.}\label{fig:size_summary_ratio_lu_boxplot_2}
\endminipage
\end{figure}

Finally, the effect of $\epsilon$ on the approximation error of the modified copula summary presented in this paper is explored. The absolute error of the approximation from the copula summary at the evaluation points $(u_1,u_2)=(0.7,0.7)$ and $(u_1,u_2)=(0.02,0.02)$, over five independent data streams of length 30000 (sampled from the aforementioned Gaussian distribution), is shown for a number of different $\epsilon$ values in Figure \ref{fig:error_copula_epsilon_boxplot}. The error was computed after every element has been added to the data streams. As found for a fixed value of $\epsilon$ in Figures \ref{fig:error_lu_boxplot_2} and \ref{fig:error1_lu_boxplot_2}, the copula approximation error is lower for the evaluation point in the tail, $(0.02,0.02)$. In addition to this, as $\epsilon$ decreases, the approximation error at both evaluation points decreases respectively too; this behaviour is described in the theoretical bound presented in (\ref{equation:modifiedcopulabound}).

\begin{figure}[!htb]
\centering
  \includegraphics[width=1\linewidth]{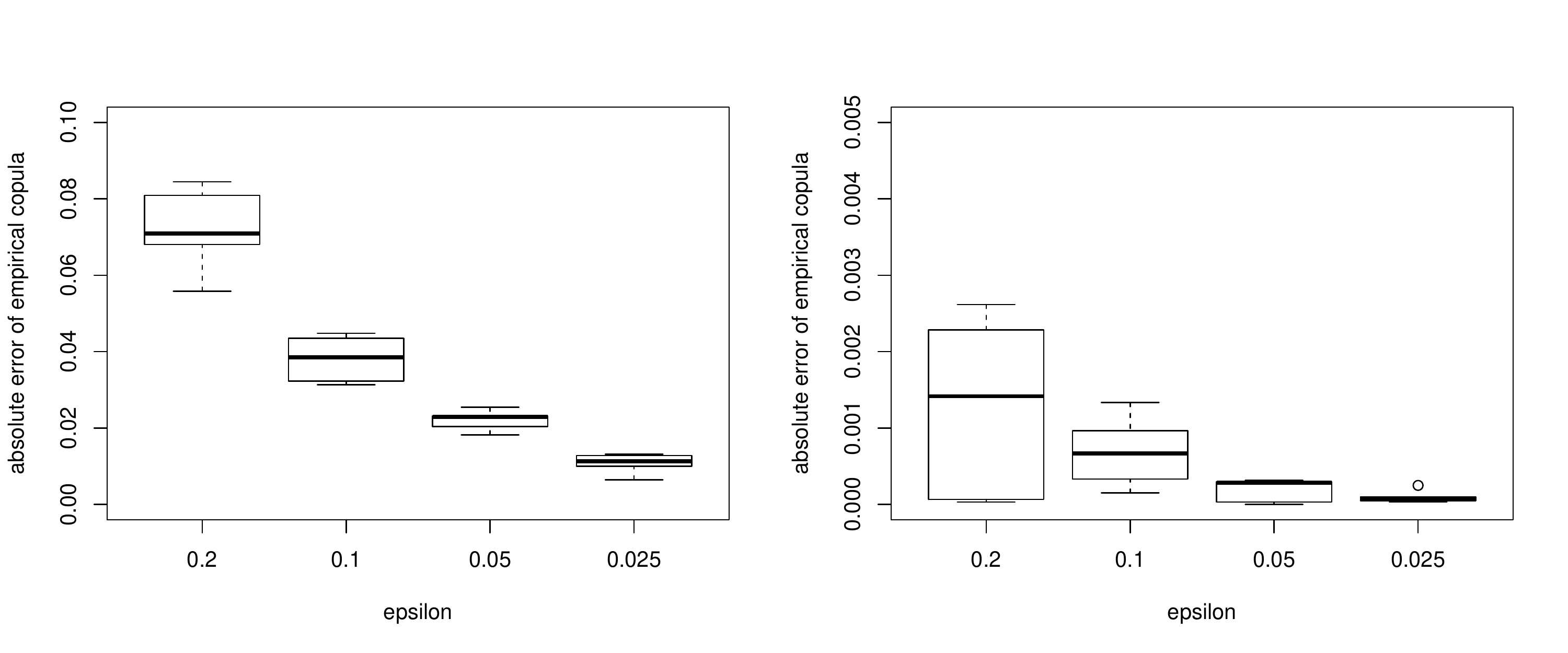}
  \caption{Boxplots of absolute error of the empirical copula approximation of $\hat{C}(0.7,0.7)$ (left) and $\hat{C}(0.02,0.02)$ (right) using the modified copula summary, with varying $\epsilon$, based on five independent data streams each. The error was computed after every element has been added to the data streams.}\label{fig:error_copula_epsilon_boxplot}
\end{figure}

\subsection{LANL netflow data}\label{data_analysis}

This section applies the proposed methodology to a case-study of cyber-netflow traffic data consisting of records of computer network communication, available freely from Los Alamos National Laboratory (LANL, https://csr.lanl.gov/data/2017.html). This provides an example of streaming data with great importance in the field of cyber-security \citep{Adams2016}. Various aspects of this data are throughly described in \cite{Turcotte2017}. It has been shown that flow based techniques have a number of computational advantages and are successful in detecting a variety of malicious network behaviors \citep{Sperotto2010}. Daily netflow data is available for 89 days (indexed as Day 2 to Day 91, starting with Day 2). Each flow records  an aggregate summary of a bi-directional network communication between any two of approximately 60000 devices of the LANL network \citep{Turcotte2017}. The aggregate summary consists of various records on the communication for every second of the day (given in epoch time format). We consider the variable {\it SrcPackets}, the number of bytes the {\it SrcDevice} sent during a certain communication event, where {\it SrcDevice} is the device that likely initiated the event. 

Modeling the records of communication events in a fixed time window is a common approach to understand normal behaviour of a cyber network, and in turn helps in detecting unusual behaviour \citep{Evangelou2016}. This type of behaviour could be detected by studying outlying extreme values of either variable mentioned above, however these particular values might be highly correlated with extrema from the other variable. By knowing the tail dependence of the variables through the tail dependence coefficients, one could reason about how likely extrema occuring for both variables at the same time signifies unusual behaviour. In this application, we consider the above data in the streaming sense over a fixed time window of one second (lowest time resolution available in the netflow data-set). Then for a given day, we consider the bivariate data stream $\big\{x_{(1)}^i,x_{(2)}^i\big\}$, where $x_{(1)}^i$ is the total volume of bytes transferred during events that initiated at time $i$ and $x_{(2)}^i$ is the total number of active edges in the network at time $i$, $i=1,\ldots,86400$. In the following analysis of the netflow data, we use log transform values of both the variables. Figure~\ref{ext-index-1} shows scatter plots of $\big\{x_{(1)}^i,x_{(2)}^i\big\}$, $i=1,\ldots,86400$, for two randomly selected days, namely Day 3 and Day 10.  The points in the red highlight those lying below the 0.005th sample quantile of \textit{both} marginals, whereas the darkened points highlight those lying in the 0.005th sample quantile of \textit{either} marginal. In both of the days, the pattern indicates a positive co-movement in the lower tails.

Figure \ref{approx_ltdc_over_i_netflow} shows the empirical lower tail dependence coefficient between the total volume of bytes transferred and the number of active edges over Day 10, for different values of $i$. As one can see, there is significant tail dependence demonstrated by a non-zero coefficient as $i$ tends to 0. Also shown in this Figure is the approximation to the lower tail dependence coefficient computed using the proposed copula summary in this paper, with $\epsilon=0.075$. The copula summary is constructed sequentially after each element is added to the data stream on each second of Day 10, in the manner described in Sec. \ref{sec:modificationsummary}. This represents the summary that would be maintained if this data was to be streamed in real-time to a netflow traffic analyzer. The approximation shown in Figure \ref{approx_ltdc_over_i_netflow} is computed after the entire data stream from Day 10 has been inserted into the copula summary; it shows the significant non-zero tail dependence coefficient well for all $i$ values considered. The average time for each element to be inserted into the copula summary was just 0.03 seconds; therefore this algorithm would be sufficiently fast to match the data acquisition rate (one second). The total number of elements stored in the copula summary was 26448, which is much less than the number of elements in the data stream (172800), and results in a size ratio of 0.15. Therefore this approximation of the lower tail dependence coefficient is space-efficient relative to storing the entire data stream.

\begin{figure}[]\centering\includegraphics[height=2.5 in]{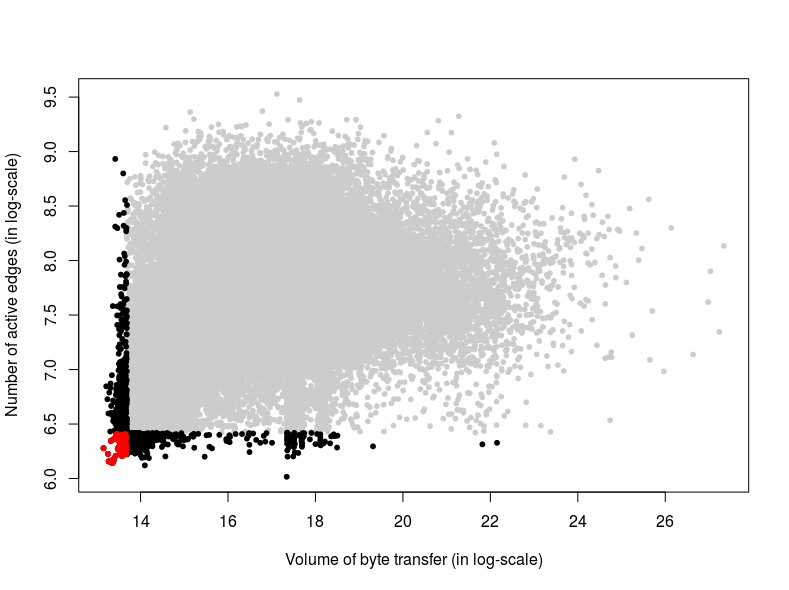}\includegraphics[height=2.5 in]{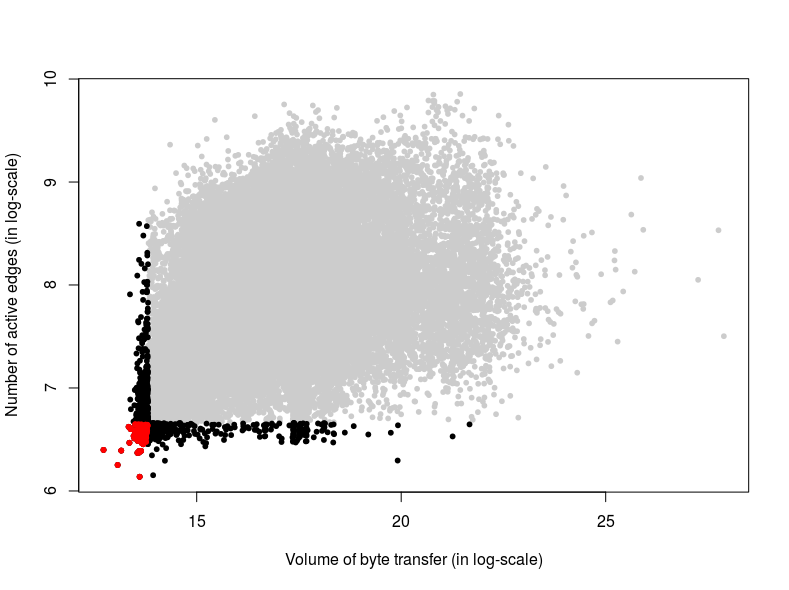} \caption[]{A scatter plot of the total volume of bytes transferred and number of active edges for Day 3 (left) and Day 10 (right) of the LANL netflow data. The darker points correspond to all points in the 0.005th quantile of \textit{each} marginal, and the red points correspond to all points in the 0.005th quantile of \textit{both} marginals.}\label{ext-index-1}\end{figure}
 
%\begin{figure}[!htb]\centering\includegraphics[width=95mm]{07-03-plot1}\caption[]{A superimposed 0.005th quantile regression line obtained using \eqref{quantile-reg} at $\tau=0.005$ for Day 3 of the LANL netflow data.}\label{qantreg-1}\end{figure}

\begin{figure}[!htb]
\centering
  \includegraphics[width=130mm]{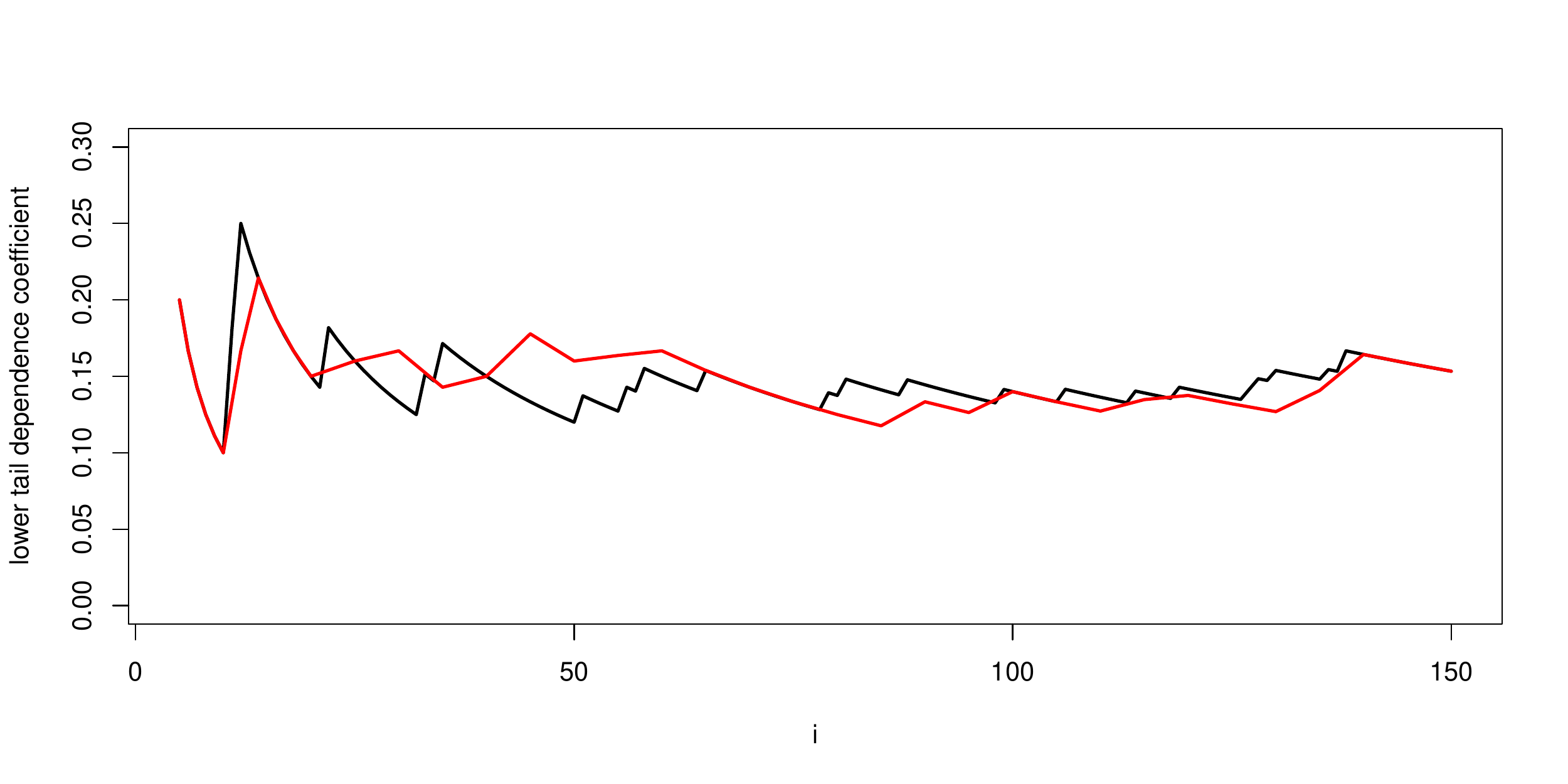}
	\caption[]{The lower tail dependence coefficient (in black) between the total volume of bytes transferred and number of active edges for Day 10 of the LANL netflow data, over different values of $i$. The approximation of this coefficient in (\ref{equation:newapproximationtaildependence}) over the same values of $i$ is also shown (in red).}\label{approx_ltdc_over_i_netflow}
\end{figure}

To see the impact of the condition in (\ref{equation:empiricalrankcondition}) on the structure of the copula summary applied to approximating the lower tail dependence coefficient for this netflow traffic data, Figure \ref{fig:g_plot_netflow} shows the values of $g_{(1)}^i$, for $i=1,\ldots,L$, defined in (\ref{equation:gis}), after all elements have been added to the data stream. Recall these values are the $g$ values within the tuples making up the $\epsilon u$-approximate quantile summary $S_{(1)}$ in the copula summary. In addition to this, Figure \ref{fig:subsummary_length_netflow} shows the values of $L_i$ (the subsummary $S_{(2)}^{i}$ lengths), for $i=1,\ldots,L$. These show a similar pattern to the values of $g_{(1)}^i$; where there is only very few elements being represented by the $i$'th tuple in $S_{(1)}$, there is also only very few elements to have entered or been merged into the corresponding subsummary $S_{(2)}^i$. This illustrated behaviour of the copula summary, that makes up the approximations to the lower tail dependence coefficient of the netflow traffic data, demonstrates the benefits of the modifications presented in this paper.

\begin{figure}[!htb]
\centering
\minipage{0.47\textwidth}
\centering
  \includegraphics[width=\linewidth]{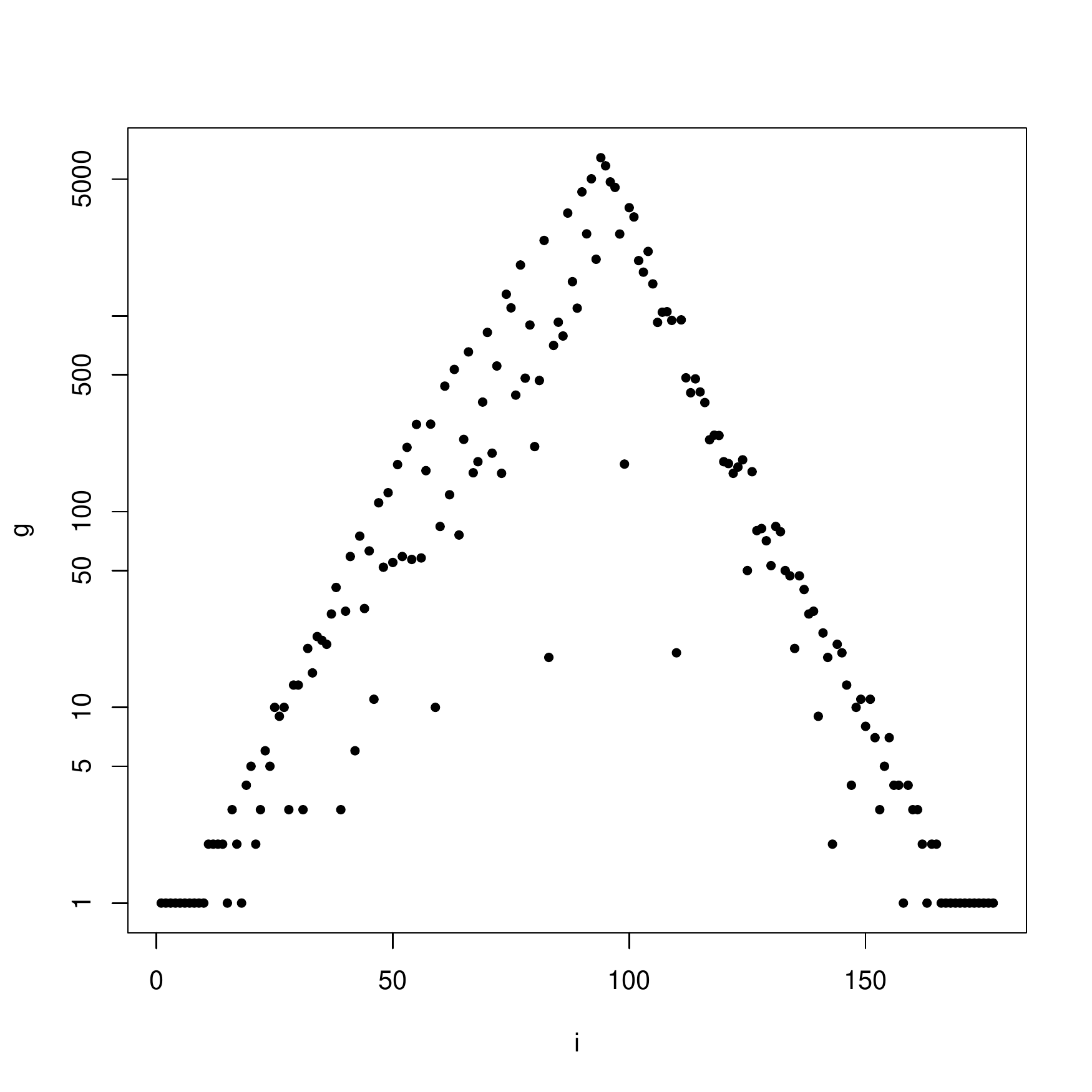}
  \caption{The values of $g_{(1)}^i$ for $i=1,\ldots,L$, in the modified copula summary constructed over Day 10 of the LANL netflow data.}\label{fig:g_plot_netflow}
\endminipage\hfill
\minipage{0.47\textwidth}
\centering
  \includegraphics[width=\linewidth]{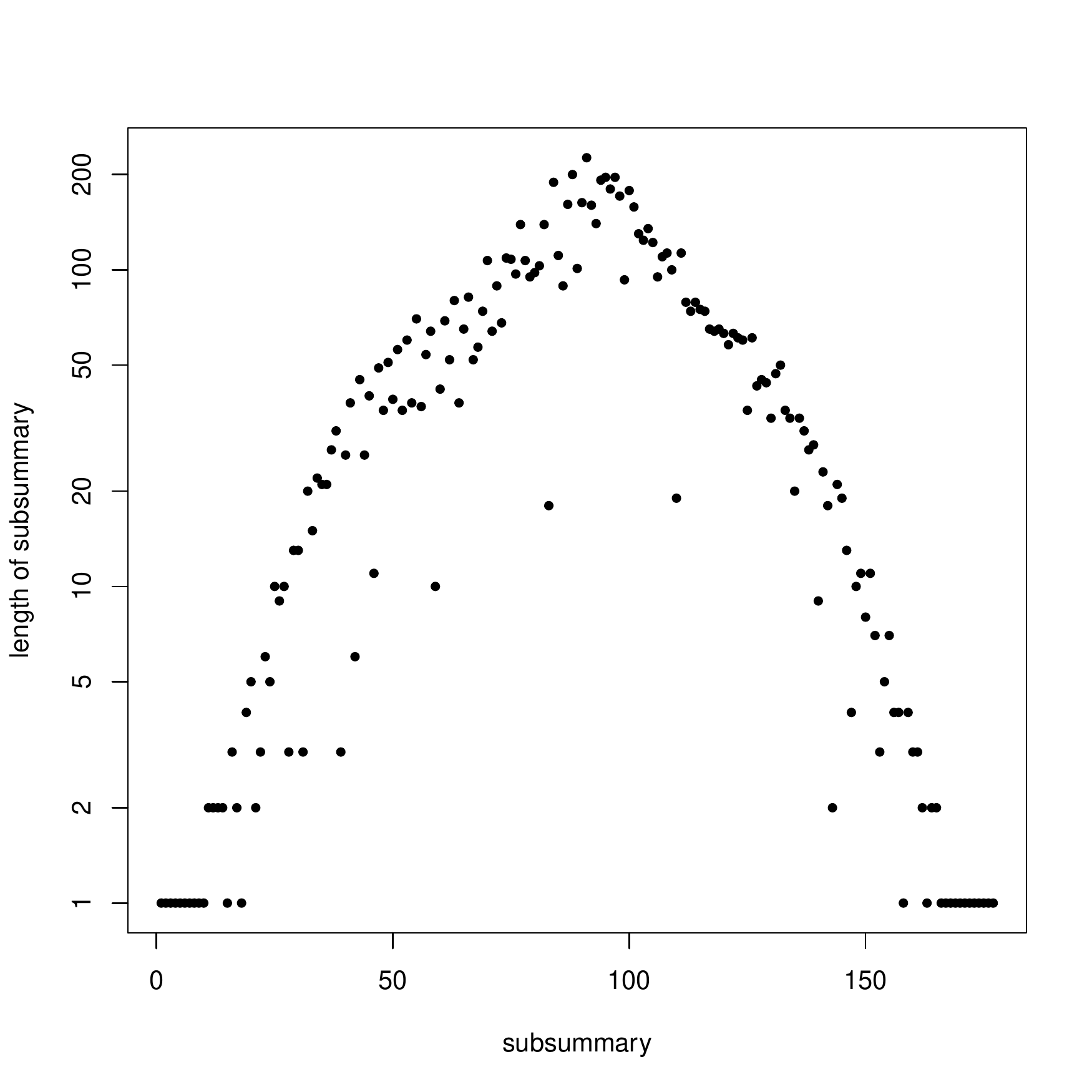}
  \caption{The values of (subsummary lengths) $L_i$ for $i=1,\ldots,L$, in the modified copula summary constructed over Day 10 of the LANL netflow data.}\label{fig:subsummary_length_netflow}
\endminipage
\end{figure}

\section{Summary and conclusion}

This article has presented a stream-length invariant bounded and space-efficient approximation to empirical tail dependence coefficients for bivariate streaming data. This regime of data means that an indefinite set of data can't be stored in its entirety, and analyzes of the data (such as modelling the tail dependence) need to be updated on-the-fly. The approximation presented in this paper is implemented via the use of a modified copula summary; the standard copula summary was introduced recently in \cite{Gregory}. The modification, first introduced in the context of quantile estimation \citep{Cormode}, allows the error of the approximation to be refined in the tails of the copula marginals, and therefore it does not grow linearly with the number of elements in the data stream (such as the case is when using the standard copula summary).

%The extension of the framework presented in this paper to higher dimensions should be progressed as a priority.
The methodology presented in this paper is for use with bivariate streaming data. An example of how the copula summaries, that make up the approximation to empirical tail dependence coefficients in this paper, could be extended to higher dimensional data streams was presented in \cite{Gregory}. Due to the wide-range of industries that now use streaming data, developing the techniques surrounding dependence modelling for this type of data is important; this paper continues this line of work. A relevant case-study of such an industry (cyber-security) is considered at the end of this paper; the proposed methodology is employed to space-efficiently capture lower tail dependence in a netflow data-set from the Los Alamos National Laboratory.  

%\section{Acknowledgements}

%The author would like to thank the Computational Statistics and Data Analysis reviewers of \cite{Gregory} for their helpful comments, some of which inspired parts of this paper.

\begin{appendix}

\section{Proof of bound on error for $\epsilon u$-approximate quantile summary approximations}
\label{sec:proofofnewinvariant}

This section shows that the condition in (\ref{equation:empiricalrankcondition}),
$$
r_{max,Q}(v_{i+1})-r_{min,Q}(v_i) \leq 2\epsilon \min\left( r_{min,Q}(v_i), n-r_{max,Q}(v_{i+1})\right),
$$
guarantees that a $\epsilon u$-approximate summary $Q$ can return an approximation $\tilde{x}_{(k)}^j$, to $\hat{F}_{n,(k)}(u)=\tilde{x}_{(k)}^{\ceil{un}}$ for $k=1,2$, where $j \in [\ceil{un}-\epsilon\min(u,(1-u)) n,\ceil{un}+\epsilon\min(u,(1-u)) n]$. This is a generalisation of the analysis in \cite{Cormode}. First, we must show that the insertion and combining operations in Sec. \ref{sec:inserting} and \ref{sec:combining} do not alter the bound in (\ref{equation:empiricalrankcondition}). Start by considering the insertion operation outlined in Sec. \ref{sec:inserting}. When this operation is carried out and the added tuple gets input after $v_{i+1}$, the first term in the minimum is unaffected. The second term in the minimum in this case is also unaffected, since $r_{max,Q}(v_{i+1})$ increases by 1 ($g=1$) but so does $n$; this means that the bound is still satisfied. If the added tuple gets input before $v_{i}$, then the first term in the minimum increases by 1; this also means that the bound is still satisfied. The second term in the minimum in this case is again unaffected since both $r_{max,Q}(v_{i+1})$ and $n$ increases by 1. Therefore the insertion operation does not affect the bound in (\ref{equation:empiricalrankcondition}). Next consider the combining operation in Sec. \ref{sec:combining}. Clearly this operation only combines tuples in $Q$ when the condition in (\ref{equation:empiricalrankcondition}) is satisfied and therefore this operation does not alter the required bound.

Finally, we now show that $Q$ can be queried to return a value $j$ where $j \in [\ceil{un}-\epsilon\min(u,(1-u)) n,\ceil{un}+\epsilon\min(u,(1-u)) n]$. Assume for simplicity that $\ceil{un}=un$, however the derivation below can simply be adjusted to relax this assumption. Let $r_i=r_{max,Q}(v_{i+1})-r_{min,Q}(v_i)$ and also let $i$ be the smallest index that satisfies
\begin{equation}
r_i+g^i+\Delta^i > un+\epsilon\min\left(un,(1-u)n\right).
\label{equation:greaterthanrcondition}
\end{equation}
Therefore, we have $r_{i-1}+g^{i-1}+\Delta^{i-1} \leq \min\left((1+\epsilon)un, \epsilon n + (1-\epsilon)un\right)$. Now, note that given the form of $r_i$ we have,
\begin{equation*}
\begin{split}
r_{i} + g^i + \Delta^i &\leq \min\left(\sum^{i-1}_{j=1}g^j, n-\sum^{i}_{j=1}g^j-\Delta^i\right) + g^i + \Delta^i\\
\quad & \leq \min\left(\sum^{i-1}_{j=1}g^j, n-\sum^{i}_{j=1}g^j-\Delta^i\right) + 2\epsilon\min\left(\sum^{i-1}_{j=1}g^j, n-\sum^{i}_{j=1}g^j-\Delta^i\right)\\
\quad & \leq (1+2\epsilon)r_i.
\end{split}
\end{equation*}
So now combining this with (\ref{equation:greaterthanrcondition}) we have that $(1+2\epsilon)r_i > \min\left((1+\epsilon)un,\epsilon n + (1-\epsilon)un\right)$, and
$$
r_i > \min\left((1-\epsilon)un, (1+\epsilon)un - \epsilon n\right).
$$
So overall we have,
$$
\underbrace{\min \left((1-\epsilon)un, (1+\epsilon)un-\epsilon n\right)}_{\text{(a)}} < r_{i-1}+g^{i-1} \leq r_{i-1}+g^{i-1}+\Delta^{i-1} \leq \underbrace{\min\left((1+\epsilon)un,\epsilon n + (1-\epsilon)un\right)}_{\text{(b)}}.
$$
By the construction of $Q$ we know that the value $l$ that satisfies $\tilde{x}_{(k)}^l=v_i$, lies in between $r_i+g^i$ and $r_i+g^i+\Delta^i$. Therefore we need to show that (b) $-$ (a) $\leq 2\epsilon\min\left(un,(1-u)n\right)$. If $u \leq 0.5$, then $(1-\epsilon)un$ is the minimum of (a) and $(1+\epsilon)un$ is the minimum of (b), and therefore,
$$(1+\epsilon)un-(1-\epsilon)un = 2\epsilon un.$$
If $u > 0.5$, then $(1+\epsilon)un-\epsilon n$ is the minimum of (a) and $(1-\epsilon)un+\epsilon n$ is the minimum of (b), and therefore
$$
\left(\epsilon n + (1-\epsilon)un\right)-\left((1+\epsilon )un - \epsilon n\right) = 2\epsilon n - \epsilon un - \epsilon un = 2 \epsilon n (1- u).
$$
\qed
\end{appendix}

\bibliography{refs}
\end{document}